\newcommand{\bO}{{O}}
\newcommand{\bo}{{o}}
\newcommand{\EE}{\mathbb{E}}
\newcommand{\VV}{\mathbb{V}}
\newcommand{\PP}{\mathbb{P}}
\newcommand{\RR}{\mathbb{R}}
\newcommand{\fracs}[2]{{ \textstyle \frac{#1}{#2} }}
\def \one  {{\mathbf{1}}}
\def \D    {{\rm d}}
\def \eps  {{\varepsilon}}
\def \hD {{\widehat{D}}}
\def \hP {{\widehat{P}}}
\def \hS {{\widehat{S}}}
\def \hY {{\widehat{Y}}}
\def \hp {{\widehat{p}}}
\def \bS {{\overline{S}}}
\def \hbS {{\overline{\hS}}}
\newtheorem{theorem}{Theorem}[section]
\newtheorem{lemma}[theorem]{Lemma}
\newtheorem{corollary}[theorem]{Corollary}
\begin{document}

\title{Analysis of multilevel Monte Carlo path simulation using the Milstein discretisation}

\author{
Michael B.~Giles\\
\texttt{mike.giles@maths.ox.ac.uk}
\and
Kristian Debrabant\\
\texttt{debrabant@imada.sdu.dk}
\and
Andreas R\"{o}{\ss}ler\\
\texttt{roessler@math.uni-luebeck.de}
}

\maketitle

\begin{abstract}
The multilevel Monte Carlo path simulation method introduced by Giles 
({\it Operations Research}, 56(3):607-617, 2008) exploits strong convergence 
properties to improve the computational complexity by combining simulations 
with different levels of resolution.  In this paper we analyse its 
efficiency when using the Milstein discretisation; this has an improved 
order of strong convergence compared to the standard Euler-Maruyama method,
and it is proved that this leads to an improved order of convergence of the 
variance of the multilevel estimator.  Numerical results are also given for 
basket options to illustrate the relevance of the analysis.
\end{abstract}



\section{Introduction}

In computational finance, Monte Carlo methods are used to estimate $\EE[P]$,
the expected value of a discounted option payoff function which depends 
on the solution of an SDE of the generic form 
\begin{equation}
\D S(t) = a(S(t),t) \, \D t + b(S(t),t)\, \D W(t), \quad 0 \leq t \leq T,
\label{eq:SDE}
\end{equation}
subject to specified initial data $S(0)\!=\!S_0$.

Using a standard Monte Carlo method with a numerical discretisation with 
first order weak convergence, to achieve an RMS error of $\eps$ would 
require $\bO(\eps^{-2})$ independent paths, each with $\bO(\eps^{-1})$ 
timesteps, giving a computational complexity which is $\bO(\eps^{-3})$.  
However, the multilevel Monte Carlo (MLMC) approach of Giles 
\cite{giles08,giles08b} reduces the cost to $\bO(\eps^{-2})$ under 
certain circumstances.

The key identity underlying the method is
\begin{equation}
\EE[\hP_L] = \EE[\hP_0] + \sum_{\ell=1}^L \EE[\hP_\ell \!-\! \hP_{\ell-1}].
\label{eq:identity}
\end{equation}
This expresses the expectation on the finest level of resolution, 
using $2^{-L}$ uniform timesteps, as the sum of the expected value 
on level 0, using just one timestep of size $T$, plus a sum of 
expected corrections between levels $\ell$ and $\ell\!-\!1$.
The quantity $\EE[\hP_\ell \!-\! \hP_{\ell-1}]$ can be estimated using
$N_\ell$ independent samples by
\begin{equation}
\hY_\ell = N_\ell^{-1} \sum_{i=1}^{N_\ell} \left( \hP_\ell^{(i)} \!-\! \hP_{\ell-1}^{(i)} \right).
\label{eq:est_l}
\end{equation}
Note that the difference $\hP_\ell^{(i)} \!-\! \hP_{\ell-1}^{(i)}$ comes from two 
discrete approximations with different timesteps but the same Brownian path; 
this difference is often small because of the strong convergence properties of 
the numerical discretisation scheme. The variance of this simple estimator is 
$\displaystyle
\VV[\hY_\ell] = N_\ell^{-1} V_\ell
$
where $V_\ell$ is the variance of a single sample.  It is the convergence of
$V_\ell$ as $\ell\rightarrow \infty$ which is the focus of this paper, because 
if $V_\ell = O(2^{-\beta \ell})$ with $\beta>1$ then an overall RMS accuracy
of $\eps$ can be achieved at a computational cost which is $O(\eps^{-2})$
\cite{giles08,giles15} through selecting an optimal number of samples 
on each level.

\begin{table}
\begin{center}
\begin{tabular}{|l|l|l|l|l|}
\hline & \multicolumn{2}{c|}{Euler-Maruyama} &  \multicolumn{2}{c|}{Milstein} \\ 
option & numerical  & analysis & numerical & analysis  \\ \hline
Lipschitz & $\bO(h)$       & $\bO(h)$              & $\bO(h^2)$ & $\bO(h^2)$ \\
Asian     & $\bO(h)$       & $\bO(h)$              & $\bO(h^2)$ & $\bO(h^2)$ \\
lookback  & $\bO(h)$       & $\bO(h)$              & $\bO(h^2)$ & $\bO(h^2 (\log h)^2)$ \\
barrier   & $\bO(h^{1/2})$ & $\bo(h^{1/2-\delta})$ & $\bO(h^{3/2})$ & $\bo(h^{3/2-\delta})$ \\
digital   & $\bO(h^{1/2})$ & $\bO(h^{1/2}\log h)$ & $\bO(h^{3/2})$ & $\bo(h^{3/2-\delta})$ \\ \hline
\end{tabular}
\end{center}

\caption{Orders of convergence for $V_\ell$ as observed numerically and 
proved analytically for both the Euler-Maruyama and Milstein discretisations; 
$\delta$ can be any strictly positive constant.}

\end{table}

For the MLMC method based on the simple Euler-Maruyama discretisation
with a uniform timestep of size $h$, 
Giles, Higham and Mao \cite{ghm09} proved that $V_\ell = \bO(h)$ for 
European options (based on the final value of the underlying $S(T)$)
with a uniform Lipschitz payoff, Asian options (based on the average 
value of the underlying) and lookback options (based on the minimum 
or maximum of the underlying). 
They also proved that $V_\ell = \bo(h^{1/2-\delta})$, for any $\delta>0$, 
for barrier options (in which the payoff is zero if the underlying 
asset crosses, or fails to cross, a certain level) and digital options 
(for which the payoff is a discontinuous function of $S(T)$).  
The final result has been tightened by Avikainen \cite{avikainen09} 
who proved in that case that $V_\ell = \bO(h^{1/2}\log h)$.  As summarised 
in Table 1, numerical simulations \cite{giles08} suggest that all of 
these results are near-optimal.

For the MLMC method based on the Milstein discretisation,
numerical simulations \cite{giles08b} suggest that $V_\ell=\bO(h^2)$ for 
European options with a uniform Lipschitz payoff and for Asian and 
lookback options, and $V_\ell=\bO(h^{3/2})$ for the barrier and digital 
options.  In this paper we aim to establish these orders of 
convergence analytically, and do so near optimally in each case.

The numerical analysis will be performed for scalar SDEs, but we begin 
the paper by presenting numerical results for basket options, in which 
each of the underlying assets has a drift 
and volatility which does not depend on the value of the other assets.
Under these conditions, the Milstein discretisation can be applied to
each asset individually, avoiding the need to simulate L{\'e}vy areas
as required in general for multi-dimensional SDEs.

\section{Numerical results for basket options}

\subsection{Milstein discretisation}

The Milstein discretisation of equation (\ref{eq:SDE}) for a
scalar SDE using a uniform timestep $h$ is
\begin{equation}
\hS_{n+1} = \hS_{n} + a_n\, h + b_n\, \Delta W_n
 + \fracs{1}{2} \, b'_n\, b_n\, 
\left(\rule{0in}{0.15in} (\Delta W_n)^2 - h \right)
\label{eq:Milstein1}
\end{equation}
In the above equation, the subscript $n$ is used to denote the timestep 
index with $t_n\equiv nh$ and $a_n, b_n, b'_n$ correspond to 
$a, b, \partial b/\partial S$ evaluated at $\hS_n,t_n$.
In addition, $\Delta W_n$ is the Brownian path increment $W_{n+1}\!-\!W_n$,
where $W_n \equiv W(t_n)$. 
Kloeden \& Platen \cite{kp92} proved that under certain conditions, which
will be defined later, the Milstein scheme
gives $O(h)$ strong convergence, and for a Lipschitz European payoff this 
immediately leads to the result that $V_\ell = O(h_\ell^2)$. 
This remains true for a put or call basket option based on the arithmetic 
average of several underlying assets, each of which is simulated using the 
Milstein discretisation.

Reference \cite{giles08b} developed and tested MLMC treatments for
the tougher challenges of Asian, lookback, barrier and digital options 
based on a single underlying asset; it is the numerical analysis of
those treatments which is the focus of this paper.
The key to the Asian, lookback and barrier option constructions is a 
conditional piecewise Brownian interpolation. Within the time interval 
$[t_n, t_{n+1}]$ we approximate the drift and volatility as being constant
and use a Brownian interpolation conditional on the two 
end values $\hS_n$ and $\hS_{n+1}$, giving
\begin{equation}
\hS(t) = \hS_n + \lambda(t)\, (\hS_{n+1}\!-\!\hS_n)
      + b_n \left(\rule{0in}{0.16in} W(t) - W_n - \lambda(t)\, (W_{n+1}\!-\!W_n) \right),
\label{eq:interp}
\end{equation}
where $\lambda(t) = (t \!-\! t_n)/h$.
Standard results for the distribution of the extrema and averages of 
Brownian motions \cite{glasserman04} can then be used to construct 
suitable multilevel estimators \cite{giles08b}; more details are given 
later in the numerical analysis section.

These treatments for a single underlying asset extend very naturally 
to basket options. If the value is dependent on a weighted average of 
$J$ underlying assets, $\bS(t) = \sum_{j=1}^J \mu_j\,S_j(t)$,
each of which satisfies an SDE of the form (\ref{eq:SDE}) driven by 
Brownian motions $W_j(t)$ with correlation matrix $\Sigma$, then
the important observation is that the weighted average of the Brownian 
interpolations for the $J$ underlying assets gives
\begin{eqnarray*}
\hbS(t) &\!=\!& \hbS_n + \lambda(t)\, (\hbS_{n+1}\!-\!\hbS_n) \\
&&\hspace{0.5in}
  +\ \sum_{j=1}^J \mu_j \, b_{j,n} \left(\rule{0in}{0.16in} W_j(t) - W_{j,n} - \lambda(t)\, (W_{j,n+1}\!-\!W_{j,n}) \right)
\\ &\!=\!& \hbS_n + \lambda(t)\, (\hbS_{n+1}\!-\!\hbS_n)
 +  \overline{b}_n \left(\rule{0in}{0.16in} W(t) - W_{n} - \lambda(t)\, (W_{n+1}\!-\!W_{n}) \right)
\end{eqnarray*}
where $W(t)$ is another scalar Brownian motion which is a weighted combination 
of the $W_j(t)$, and
$\overline{b}_n$ is defined by 
\[
\overline{b}^2_n = \sum_{i,j} \mu_i \, b_{i,n}\, \Sigma_{i,j}\, \mu_j\,b_{j,n}.
\]
Since the Brownian interpolation for the basket average has the same form 
as the scalar interpolation, the multilevel estimators can be constructed 
in exactly the same way as in \cite{giles08b}, but using $\overline{b}_n$.

Similarly, for the digital option which has a discontinuous payoff
based on a single underlying, one can use a constant coefficient Brownian 
extrapolation conditional on the value $\hS_{N-1}$, one timestep before 
the end.  Following an approach used for payoff smoothing for pathwise 
sensitivity analysis \cite{glasserman04}, the conditional expectation 
for the payoff can be evaluated analytically and this is then used to 
construct the multilevel estimator \cite{giles08b}.  This treatment 
also extends naturally to the basket case by considering the weighted
average of the $J$ conditional extrapolations.

\subsection{Numerical results}

The numerical results to be presented are for a basket of five assets, 
each modelled as a geometric Brownian motion:
\[
\D S_j(t) = r \, S_j(t) \ \D t + \sigma_j \, S_j(t) \ \D W_j(t), \quad 0<t<T,
\]
using a constant risk-free interest rate $r\!=\!0.05$, and five
volatilities $\sigma = 0.2, 0.25, 0.3, 0.35, 0.4$.  The initial asset 
values are $S_j(0) = 100$, the simulation interval is taken to be 
$T\!=\!1$, and the driving Brownian motions have a correlation of 0.25.
In each case, the option price is based on a simple arithmetic average
of the five assets.

These tests are based on ones performed previously in \cite{giles09},
but they have been updated to use the latest version of the high-level
MLMC driver software \cite{giles15} which estimates the rates of 
convergence of the weak error and the variance $V_\ell$, and therefore
determines (approximately) the optimal number of levels $L$ and the 
number of samples $N_\ell$ on each level.
\footnote{These numerical tests are included in the MATLAB software available online at
{\tt http://people.maths.ox.ac.uk/gilesm/mlmc/}
and an archived version is kept on
{\tt http://people.maths.ox.ac.uk/gilesm/mlmc.html}
}

\subsubsection{Asian Option}

\begin{figure}[t!]
\begin{center}
\includegraphics[width=4.2in]{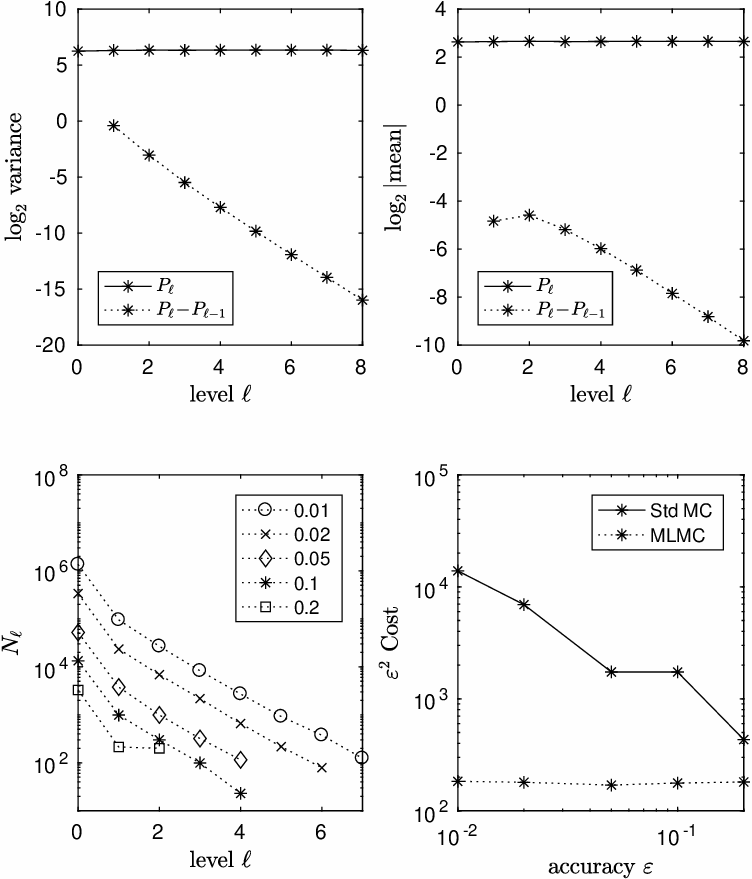}
\end{center}
\vspace{-0.2in}
\caption{Asian option}
\label{fig:GBM2}
\end{figure}

The Asian basket option has discounted payoff
$\displaystyle
P = \exp(-rT)\ \max\left(0, \overline{\overline{S}} \!-\! K  \right),
$
where $\overline{\overline{S}}$ is the time-average of the average 
of the underlying assets, and the strike is $K\!=\!100$.

The top left plot in Figure \ref{fig:GBM2} shows the behaviour of the 
variance of both $\hP_\ell$ and $\hP_\ell-\hP_{\ell-1}$.  The slope of the 
latter is approaching a value approximately equal to $-2$, indicating 
that $V_\ell\!=\!O(h_\ell^2)$. On level $\ell\!=\!2$, 
which has just 4 timesteps, $V_\ell$ is already almost 1000 times smaller 
than the variance $\VV[\hP_\ell]$ of the standard Monte Carlo method with the 
same timestep.
The top right plot shows that $|\EE[\hP_\ell\!-\!\hP_{\ell-1}]|$ is approximately 
$O(h_\ell)$, corresponding to first order weak convergence.
This is used to determine the number of levels that are required to reduce 
the bias to an acceptable level \cite{giles08b}.

The bottom two plots have results from five multilevel calculations for
different values of $\eps$.  Each line in the bottom left plot shows the values 
for $N_\ell, \ell=0,\ldots, L$, with the values decreasing with $\ell$ because of the 
decrease in both $V_\ell$ and $h_\ell$. It can also be seen that the value for $L$, 
the maximum level of timestep refinement, increases as the value for $\eps$ 
decreases, requiring a lower bias error \cite{giles08b}.
The bottom right plot shows the variation with $\eps$ of $\eps^2\,C$ 
where the computational complexity $C$ is defined as
$\displaystyle
C = \sum_\ell 2^\ell N_\ell,
$
which is the total number of fine grid timesteps on all levels.
One line shows the results for the multilevel
calculation and the other shows the corresponding cost of a standard Monte Carlo 
simulation of the same accuracy, i.e.~the same bias error corresponding to the 
same value for $L$, and the same variance.  It can be seen that $\eps^2 C$ is 
almost constant for the multilevel method, as expected, whereas for the standard 
Monte Carlo method it increases with $L$.  For the most accurate case, 
$\eps\!=\!0.01$, the multilevel method is approximately 100 times more efficient 
than the standard method.

\subsubsection{Lookback Option}

\begin{figure}[t!]
\begin{center}
\includegraphics[width=4.2in]{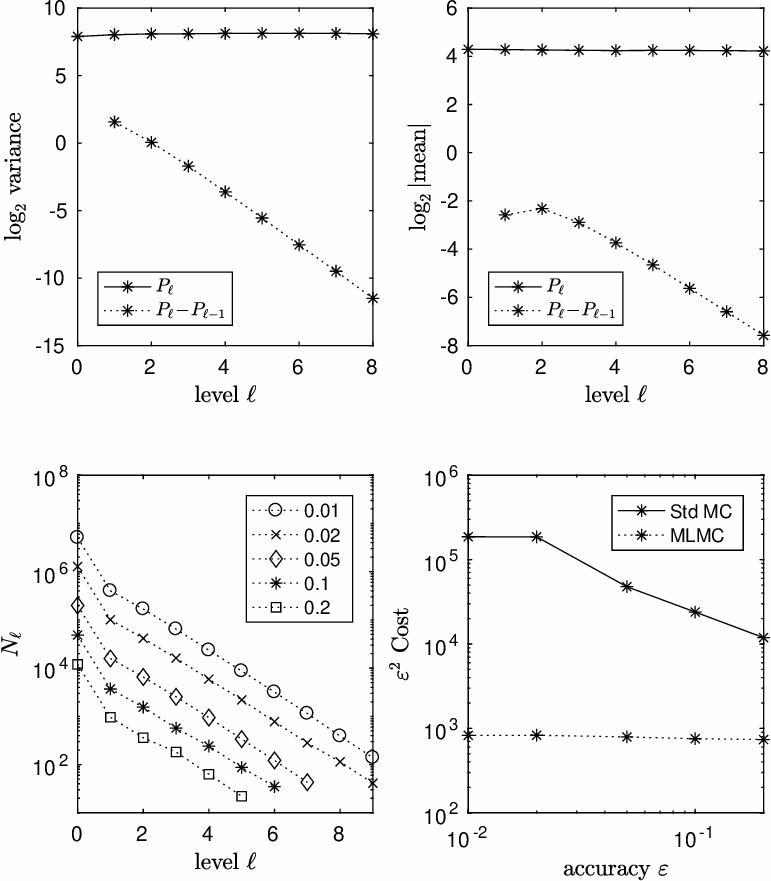}
\end{center}
\vspace{-0.2in}
\caption{Lookback option}
\label{fig:GBM3}
\end{figure}

The basket lookback option we consider has the discounted payoff
$\displaystyle
P = \exp(-rT) \left( \overline{S}(T) - \min_{0<t<T} \overline{S}(t) \right).
$

The top left plot in Figure \ref{fig:GBM3} shows that the variance is $O(h_\ell^2)$, 
while the top right plot shows that the mean correction is $O(h_\ell)$.  The bottom 
left plot shows that more levels are required to reduce the discretisation bias to the 
required level.  Consequently, the savings relative to the standard Monte Carlo treatment 
are greater, up to a factor of approximately 200 for $\eps\!=\!0.01$.  The computational 
cost of the multilevel method is almost perfectly proportional to $\eps^{-2}$.

\subsubsection{Barrier Option}

\begin{figure}[t!]
\begin{center}
\includegraphics[width=4.2in]{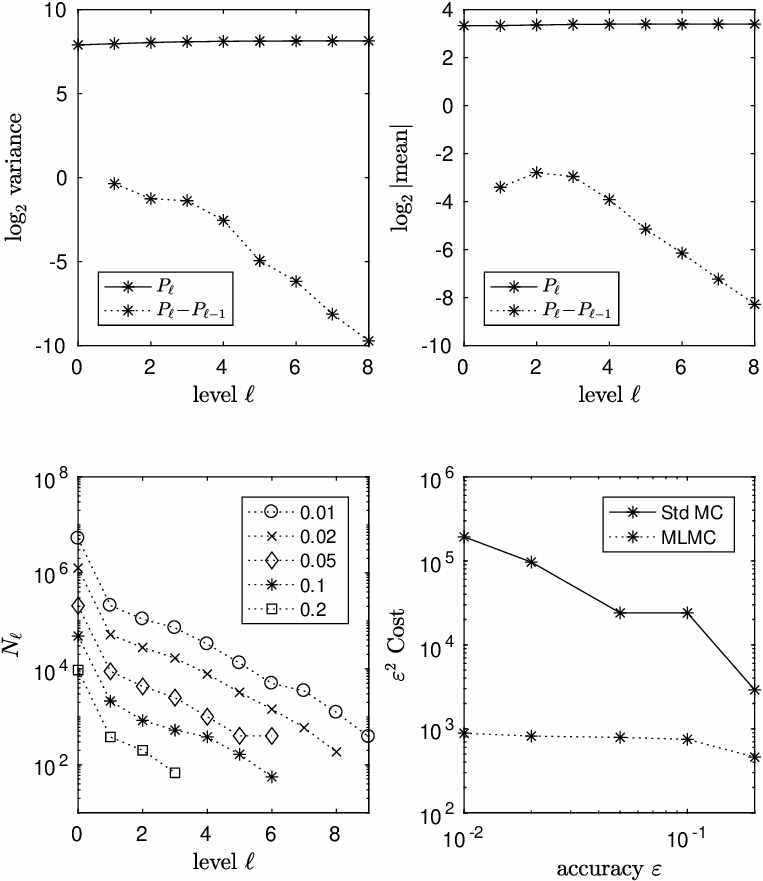}
\end{center}
\vspace{-0.2in}
\caption{Barrier option}
\label{fig:GBM4}
\end{figure}

The barrier option which is considered is a down-and-out call 
with payoff
$\displaystyle
P = \exp(-rT)\ (\overline{S}(T) - K)^+ \ {\bf 1}_{\tau>T},
$
where the notation $(x)^+$ denotes $\max(0,x)$, 
${\bf 1}_{\tau>T}$ is an indicator function taking value 1 if
the argument is true, and zero otherwise, and the crossing time $\tau$ is defined
as 
$\displaystyle
\tau = \inf_{t>0}\left\{ \overline{S}(t)<B \right\}.
$
The barrier value is taken to be $B\!=\!85$, and the strike is again $K\!=\!100$.

The top left plot in Figure \ref{fig:GBM4} shows that the variance is 
approximately $O(h_\ell^{3/2})$.  The reason for this is that an $O(h_\ell^{1/2})$ 
fraction of the paths have a minimum which lies within $O(h_\ell^{1/2})$ of the 
barrier.  It will be proved that for these paths the difference 
between the coarse and fine path payoff values is $O(h_\ell^{1/2})$, giving a 
contribution to the overall variance which is $O(h_\ell^{3/2})$.

The top right plot shows that the mean correction is $O(h_\ell)$, corresponding to 
first order weak convergence.  The bottom right plot shows that the computational 
cost of the multilevel method is again almost perfectly proportional to $\eps^{-2}$, 
and for $\eps\!=\!0.01$ it is 200 times more efficient than the standard Monte Carlo 
method.

\subsubsection{Digital Option}

\begin{figure}[t!]
\begin{center}
\includegraphics[width=4.2in]{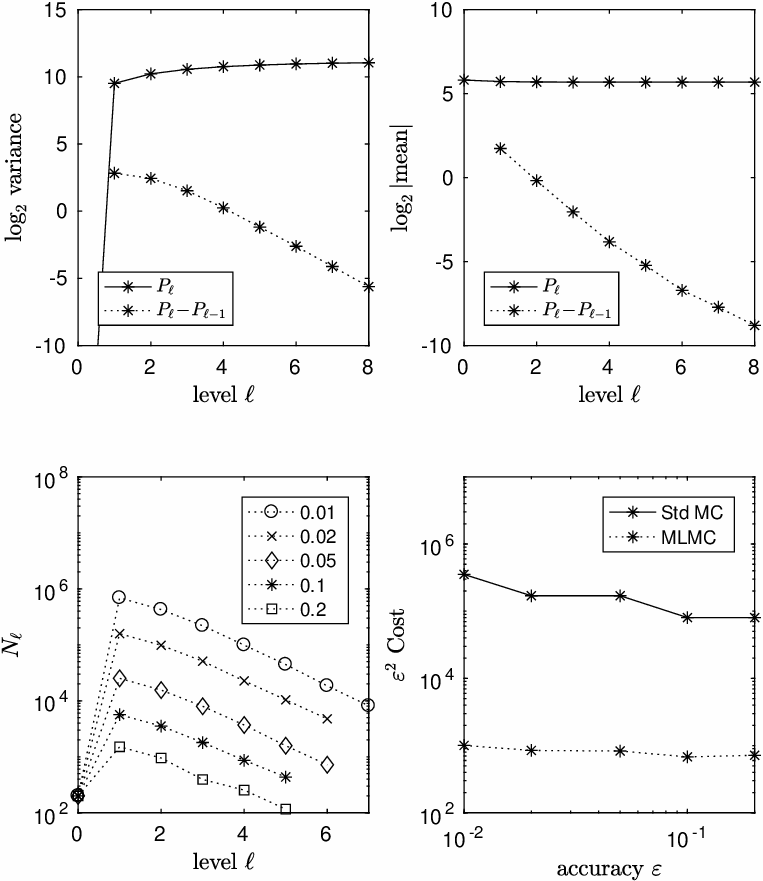}
\end{center}
\vspace{-0.2in}
\caption{Digital option}
\label{fig:GBM5}
\end{figure}

The digital option has the discounted payoff
$\displaystyle
P = \exp(-rT)\ K \ {\bf 1}_{S(T)>K}
$
with strike $K\!=\!100$.

The top left plot in Figure \ref{fig:GBM5} shows that the variance is approximately 
$O(h_\ell^{3/2})$.  The reason for this is similar to the argument for the barrier option.  
$O(h_\ell^{1/2})$ of the paths have a minimum which lies within $O(h_\ell^{1/2})$ of the 
strike.  The fine path and coarse path trajectories differ by $O(h_\ell)$, due to the 
first order strong convergence of the Milstein scheme and it will be proved that this 
results in an $O(h_\ell^{1/2})$ difference between the coarse and fine path evaluations.

One strikingly different feature is that the variance of the level 0 estimator
is zero.  This is because the multilevel treatment introduced in \cite{giles08b} uses
a conditional expectation (based on a simple Brownian extrapolation for which the
expectation is known analytically) evaluated one timestep before the end.  At level 
$l\!=\!0$ where there would usually be one timestep, there is no path simulation at all; 
one simply uses the analytic expression for the conditional expectation.  This reduces 
the cost of the multilevel calculations even more than usual, giving more than a factor 
of 400 computational savings for $\eps\!=\!0.01$.

\section{Numerical analysis}

For clarity, the numerical analysis is now presented for financial options based on a 
single underlying asset, but for the reasons already discussed the numerical analysis
extends immediately to basket options.

\subsection{Preliminaries}

The analysis builds on a large body of existing results 
which are included here for completeness.

\subsubsection{Solution of the SDE and its Milstein discretisation}

We will assume throughout that the SDE (\ref{eq:SDE}) is scalar, and 
the drift $a \in C^{2,1}(\RR \!\times\! \RR^+)$ and 
volatility $b \in C^{3,1}(\RR \!\times\! \RR^+)$
satisfy the following standard conditions in which we use the notation
$L_0 \equiv \partial/\partial t + a\, \partial/\partial S$ and
$L_1 \equiv b\, \partial/\partial S$.

\begin{itemize}
\item
A1 (uniform Lipschitz condition): there exists $K_1$ such that 
\[
\left| a(x,t) - a(y,t) \right| +
\left| b(x,t) - b(y,t) \right| +
\left| L_1 b(x,t) - L_1 b(y,t) \right|\ \leq\ K_1\, | x \!-\! y |
\]
\item
A2 (linear growth bound): there exists $K_2$ such that 
\begin{eqnarray*}\ 
\ \left| a(x,t) \right| + \left| L_0 a(x,t) \right| + \left| L_1 a(x,t) \right|
+ \left| b(x,t) \right| + \left| L_0 b(x,t) \right| \\
+ \left| L_1 b(x,t) \right| + \left| L_0 L_1 b(x,t) \right| + \left| L_1 L_1 b(x,t) \right|
&\leq& K_2\, (1 + |x|) \nonumber
\end{eqnarray*}
\item
A3 (additional Lipschitz condition): there exists $K_3$ such that 
\[
\left| b(x,t) - b(x,s) \right|\ \leq\ K_3\, (1 + |x|)\, \sqrt{| t \!-\! s |}
\]
\end{itemize}

Under these conditions, we have the following result for the analytic solution
to the SDE \cite{kp92}.

\begin{theorem}
\label{thm:KP0}
Provided the assumptions A1-A3 are satisfied, then for all positive 
integers $m$,
$\displaystyle
\EE\left[\sup_{0<t<T} |S(t)|^m\right] < \infty.
$
\end{theorem}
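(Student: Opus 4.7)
The plan is the classical Kloeden--Platen moment bound: apply It\^o's formula to a smoothed even power of $S$, localise with a stopping time so that the stochastic integral becomes a genuine martingale, control it via Burkholder--Davis--Gundy (BDG), and close the estimate with Gronwall's lemma. Only the linear growth bounds on $|a|$ and $|b|$ from A2 play a role here; A1, A3 and the higher-derivative parts of A2 are not needed for this particular statement.

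First I would fix an integer $m \geq 2$ (the case $m = 1$ follows from $\EE Y \leq (\EE Y^2)^{1/2}$) and set $\phi(x) = (1+x^2)^{m/2}$, a smooth even function satisfying $|\phi'(x)| \leq C |x|^{m-1}$ and $|\phi''(x)| \leq C |x|^{m-2}$. With the localising stopping time $\tau_R = \inf\{t \geq 0 : |S(t)| \geq R\}$, It\^o's formula gives
\[
\phi(S(t \wedge \tau_R)) = \phi(S_0) + \int_0^{t \wedge \tau_R}\!\left(\phi'(S)\, a + \fracs{1}{2}\, \phi''(S)\, b^2\right) \D s + \int_0^{t \wedge \tau_R} \phi'(S)\, b\, \D W.
\]
The linear growth bound $|a(x,t)| + |b(x,t)| \leq K_2(1+|x|)$ from A2 then yields a drift integrand bounded by $C\, \phi(S)$ and a squared diffusion integrand bounded by $C\, \phi(S)^2$ for a constant $C$ depending only on $m$ and $K_2$.

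Taking $\sup_{0 \leq t \leq T}$ and then expectations, the drift contribution is controlled directly by $C \int_0^t \EE\, \phi(S(s \wedge \tau_R))\, \D s$ via Fubini. For the martingale term, BDG gives
\[
\EE \sup_{0 \leq t \leq T} \left| \int_0^{t \wedge \tau_R} \phi'(S)\, b\, \D W \right| \leq C\, \EE\!\left[\left(\int_0^{T \wedge \tau_R} \phi(S)^2\, \D s\right)^{\!1/2}\right]\!,
\]
which by the pointwise inequality $\left(\int_0^T \phi^2\, \D s\right)^{1/2} \leq \bigl(\sup_{s \leq T} \phi\bigr)^{1/2} \left(\int_0^T \phi\, \D s\right)^{1/2}$ together with Young's inequality is bounded by $\fracs{1}{2}\, \EE \sup_{0 \leq t \leq T} \phi(S(t \wedge \tau_R)) + C \int_0^T \EE\, \phi(S(s \wedge \tau_R))\, \D s$.

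Finally, set $f_R(t) = \EE \sup_{0 \leq s \leq t} \phi(S(s \wedge \tau_R))$; by construction $f_R(T) \leq (1+R^2)^{m/2}$ is a priori finite, so the half-supremum term may legitimately be absorbed into the left-hand side, leaving $f_R(t) \leq 2\phi(S_0) + C \int_0^t f_R(s)\, \D s$. Gronwall's lemma yields $f_R(T) \leq 2\phi(S_0)\, e^{CT}$ uniformly in $R$, and monotone convergence as $R \to \infty$ concludes the argument since $|x|^m \leq \phi(x)$. The main technical obstacle is the BDG--Young absorption step, which is only rigorous once one knows $f_R(t)$ is finite; localisation by $\tau_R$ is introduced precisely to secure this, and the bound must then be shown to survive the passage $R \to \infty$.
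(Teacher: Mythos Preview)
The paper does not supply its own proof of this theorem; it simply quotes the result from Kloeden and Platen \cite{kp92}, so there is nothing to compare against beyond that citation. Your argument is precisely the classical moment-bound proof one finds there (It\^o's formula applied to $(1+x^2)^{m/2}$, localisation by $\tau_R$, BDG plus the Young-inequality absorption of the supremum, Gronwall, and monotone convergence as $R\to\infty$), and it is correct; the one imprecision is the stated pointwise bound $|\phi'(x)|\le C|x|^{m-1}$, which fails near $x=0$ for $m\ge 3$ and should read $|\phi'(x)|\le C(1+|x|)^{m-1}$, but your subsequent conclusion that the drift integrand is $\le C\,\phi(S)$ and the squared diffusion integrand is $\le C\,\phi(S)^2$ is unaffected.
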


\vspace{0.1in}


Kloeden \& Platen \cite{kp92} define the following continuous time 
interpolant of the Milstein discretisation in (\ref{eq:Milstein1})
for $t_n \!\leq\! t \!\leq\! t_{n+1}$,
\begin{eqnarray}
\hS_{KP}(t) &=&  \hS_{n} + a_n\, (t\!-\!t_n) + b_n\, (W(t)\!-\!W_n) 
\nonumber \\ && \hspace{0.22in}
 +\ \fracs{1}{2} \, b'_n\, b_n\, \left(  (W(t) \!-\! W_n)^2 - (t\!-\!t_n) \right),
\label{eq:KP}
\end{eqnarray}
and prove the following result.

\begin{theorem}
\label{thm:KP}
Provided the assumptions A1-A3 are satisfied, then for all positive integers $m$ 
there exists a constant $C_m$ such that
\[
\EE\left[\sup_{0<t<T} |S(t) - \hS_{KP}(t)|^m\right] < C_m\, h^m, \quad \quad
\EE\left[\sup_{0<t<T} |\hS_{KP}(t)|^m\right] < C_m.
\]
\end{theorem}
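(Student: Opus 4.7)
The plan is to prove the two bounds in sequence, handling the moment bound on $\hS_{KP}$ first since the strong-convergence argument will depend on it. For the moment bound, I would write (\ref{eq:KP}) in an integral form that telescopes across subintervals, take the $m$-th power of the supremum, and apply Burkholder--Davis--Gundy to the Brownian and double-Brownian contributions together with H\"older on the drift term. The linear-growth assumption A2 applied to $a$, $b$, and $L_1 b = b' b$ then lets me close a discrete Gronwall loop on $u_n := \EE[\sup_{t \leq t_n}|\hS_{KP}(t)|^m]$ of the form $u_n \leq C_m(1 + h \sum_{k<n} u_k)$, giving a bound uniform in $h$.

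For the strong convergence, I would use the stochastic Taylor (It\^o--Taylor) expansion on each subinterval $[t_n,t_{n+1}]$. Applying It\^o's formula,
\begin{equation*}
b(S(s),s) = b(S(t_n),t_n) + \int_{t_n}^s L_0 b(S(u),u)\, \D u + \int_{t_n}^s L_1 b(S(u),u)\, \D W_u,
\end{equation*}
and analogously to $a(S(s),s)$ and to $L_1 b(S(u),u)$, then substituting back into the integral SDE, I obtain for $t\in[t_n,t_{n+1}]$
\begin{equation*}
S(t) = S(t_n) + a(S(t_n),t_n)(t-t_n) + b(S(t_n),t_n)(W(t)-W_n) + \fracs{1}{2}(b'b)(S(t_n),t_n)\bigl((W(t)-W_n)^2 - (t-t_n)\bigr) + R_n(t),
\end{equation*}
with $R_n$ a sum of iterated integrals of $L_0 a$, $L_1 a$, $L_0 b$, $L_0 L_1 b$, $L_1 L_1 b$. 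By BDG, A2 and the moment bounds from Theorem~\ref{thm:KP0} and from the first step, each of these contributes at most $\bO(h^{3m/2})$ to the $m$-th moment (the weakest remainder is $L_1L_1 b\cdot\!\int\!\int\!\int \D W\,\D W\,\D W$).

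Subtracting (\ref{eq:KP}) and using A1 to control the differences of the leading coefficients by $|S(t_n)-\hS_n|$ (with a time-H\"older contribution from A3 when aligning $b(S(s),s)$ and $b(\cdot,t_n)$), taking the supremum in $t$, and applying Doob's and BDG's inequalities yields
\begin{equation*}
\EE\Big[\sup_{0<t<t_n}|e(t)|^m\Big] \leq C\int_0^{t_n}\EE\Big[\sup_{0<u<s}|e(u)|^m\Big]\,\D s + C' h^m,
\end{equation*}
with $e := S - \hS_{KP}$; the continuous Gronwall lemma then gives the claimed $C_m\, h^m$ bound.

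The main technical obstacle is the uniform-in-$t$ control required by the supremum inside the expectation: every iterated-integral remainder must be estimated via BDG rather than via It\^o isometry so that the supremum can be absorbed, and because the integrands $L_0 a$, $L_1 a$, $L_0 b$, $L_0 L_1 b$, $L_1 L_1 b$ are only polynomially bounded in $|S|$ and $|\hS_{KP}|$, the argument is genuinely circular unless the moment bounds on $S$ and on $\hS_{KP}$ are both in hand. Making the constants uniform across all $m$ requires repeated H\"older steps with carefully chosen conjugate exponents, but introduces no new structural difficulty.
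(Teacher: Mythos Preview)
The paper does not supply its own proof of this theorem; it is quoted as a known result from Kloeden and Platen \cite{kp92}, introduced by the sentence ``and prove the following result.'' There is therefore nothing in the paper to compare your argument against. Your outline---moment bounds on the interpolant via linear growth and discrete Gronwall, then It\^o--Taylor expansion of $S$ on each subinterval, BDG for the supremum, Lipschitz control of coefficient differences, and a continuous Gronwall to close---is the standard route taken in that reference, and is structurally sound under the stated assumptions A1--A3.
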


\subsubsection{Brownian bridge results}

If the drift $a$ and volatility $b$ are constant, the scalar SDE (\ref{eq:SDE})
has solution 
\[
S(t) = S_0 + a\, t + b\, W(t)
\]
and hence within the time interval $[t_n, t_{n+1}]$ of length $h$ we have
\begin{equation}
S(t) = S_n + \lambda(t)\,  (S_{n+1}\!-\!S_n) + b\, \left( W(t) - W_n - \lambda(t)\, (W_{n+1}\!-\!W_n) \right)
\label{eq:bridge}
\end{equation}
where again $\lambda(t) \equiv (t - t_n)/h$. 
This means that the deviation of $S(t)$ from a piecewise linear interpolation 
of the values $S_n \equiv S(t_n)$ is proportional to the deviation of $W(t)$ 
from its piecewise linear interpolation.  It can be proved that the distribution 
of the latter is independent of the Brownian increment $W_{n+1}\!-\!W_n$, and 
furthermore we have the following results (see for example \cite{glasserman04}).

\begin{lemma}
\label{thm:asian1}
Conditional on $S_n$ and $S_{n+1}$, the distribution for the integral of 
$S(t)$ over the interval $[t_n, t_{n+1}]$ is given by 
\begin{equation}
\int_{t_n}^{t_{n+1}} S(t) \, \D t = \fracs{1}{2} h \left( S_n + S_{n+1} \right)
+ b\, I_n
\end{equation}
where 
\[
I_n \equiv \int_{t_n}^{t_{n+1}}  \left( W(t) - W_n - \lambda(t)\, (W_{n+1}\!-\!W_n) \right)\, \D t
\]
is a $N(0,\fracs{1}{12}h^3)$ Normal random variable, independent of $W_{n+1}\!-\!W_n$.
\end{lemma}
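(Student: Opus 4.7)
The plan is to substitute the interpolation formula (\ref{eq:bridge}) directly into the integral and then analyse the leftover stochastic term as a linear functional of a Gaussian process.

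First I would split the integral by plugging (\ref{eq:bridge}) into $\int_{t_n}^{t_{n+1}} S(t)\, \D t$. The deterministic piece $S_n + \lambda (S_{n+1} - S_n)$ integrates, after the change of variable $\lambda = (t-t_n)/h$, to $\fracs{1}{2} h (S_n + S_{n+1})$ since $\int_0^1 1 \, \D \lambda = 1$ and $\int_0^1 \lambda \, \D \lambda = \fracs{1}{2}$. This isolates the random term as $b\, I_n$ with $I_n$ as defined in the statement, so the decomposition in the lemma follows immediately once we establish the distributional claims about $I_n$.

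Next I would argue that $I_n$ is Gaussian. Since $I_n$ is an almost sure limit of Riemann sums of the centred Brownian bridge $B(t) \equiv W(t) - W_n - \lambda (W_{n+1} - W_n)$, which is itself a linear functional of $\{W(t) - W_n : t \in [t_n,t_{n+1}]\}$, the pair $(I_n, W_{n+1} - W_n)$ is jointly Gaussian. In particular $\EE[I_n] = 0$, so to prove independence it suffices to verify $\mathrm{Cov}(I_n, W_{n+1} - W_n) = 0$. By Fubini this reduces to showing that $\int_{t_n}^{t_{n+1}} \EE[B(t)(W_{n+1}-W_n)]\,\D t = 0$, which is a direct computation from $\EE[W(t) W(s)] = \min(t,s)$: one obtains $(t - t_n) - \lambda\, h = 0$ pointwise in $t$, so the integral vanishes.

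Finally I would compute the variance. Writing $s-t_n$ and $t-t_n$ for the shifted times, the covariance function of the Brownian bridge $B$ is the standard
\[
\EE[B(s) B(t)] = \min(s-t_n, t-t_n) - \frac{(s-t_n)(t-t_n)}{h},
\]
and Fubini gives $\VV[I_n] = \int_{t_n}^{t_{n+1}}\!\int_{t_n}^{t_{n+1}} \EE[B(s)B(t)]\, \D s\, \D t$. Evaluating this elementary double integral (split the $\min$ by symmetry into $2 \int\! \int_{s<t}$) yields $\fracs{1}{12} h^3$, completing the proof. The only step requiring care is the covariance calculation and the Fubini justification, both of which are routine for this Gaussian integrand; no analytic obstacle is expected.
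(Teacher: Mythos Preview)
Your argument is correct: the substitution of (\ref{eq:bridge}) gives the algebraic decomposition, joint Gaussianity of $(I_n, W_{n+1}\!-\!W_n)$ follows from linearity, the pointwise identity $\EE[B(t)(W_{n+1}\!-\!W_n)] = (t-t_n) - \lambda h = 0$ yields independence, and the bridge covariance integrates to $\fracs{1}{12}h^3$ exactly as you sketch.

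There is nothing to compare against, however: the paper does not prove this lemma. It is stated in the section of background results with a reference to Glasserman's textbook and no proof is given. Your direct computation is the standard one and would be an appropriate verification of the cited fact.
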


\begin{lemma}
\label{thm:lookback1}
Conditional on $S_n$ and $S_{n+1}$, the distributions for the minimum and 
maximum of $S(t)$ over the interval $[t_n, t_{n+1}]$ are given by 
\begin{eqnarray*}
S_{n,min} &=& \fracs{1}{2} \left( S_n + S_{n+1} 
- \sqrt{ \left(S_{n+1} \!-\! S_n\right)^2 - 2\, b^2 \, h \log U_n }\ \right), \\
S_{n,max} &=& \fracs{1}{2} \left( S_n + S_{n+1} 
+ \sqrt{ \left(S_{n+1} \!-\! S_n\right)^2 - 2\, b^2 \, h \log V_n }\ \right),
\end{eqnarray*}
where $U_n$ and $V_n$ are each uniformly distributed on $(0,1)$.
\end{lemma}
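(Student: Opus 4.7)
My plan is to reduce the statement to the classical reflection-principle formula for the extremes of a Brownian bridge. Since $a$ and $b$ are constant on $[t_n,t_{n+1}]$, equation (\ref{eq:bridge}) gives $S(t) = S_n + a\,(t\!-\!t_n) + b\,(W(t)\!-\!W_n)$, and setting $X(s) = (S(t_n\!+\!s) - S_n)/b$ for $s \in [0,h]$ gives a process with $X(0)=0$, $X(h) = w := (S_{n+1}\!-\!S_n)/b$, and $\min_{t\in[t_n,t_{n+1}]} S(t) = S_n + b\,\min_{0\le s \le h} X(s)$. Hence it suffices to compute the conditional law of $m_X := \min_{s} X(s)$ given its two endpoints.

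The first key observation is that this conditional law does not depend on the drift $a$. Unconditionally $X$ is a Brownian motion with drift $a/b$, but the drift term is measurable with respect to $X(0)$ and $X(h)$: by Bayes' rule (equivalently, by Girsanov, noting that the Radon--Nikodym derivative cancels against the density of the Gaussian endpoint), the finite-dimensional distributions of $X$ conditional on $X(0)=0,\, X(h)=w$ are those of a standard Brownian bridge from $0$ to $w$. This explains why $a$ does not appear in the stated formula.

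Once this reduction is in place, I would invoke the standard reflection-principle identity
\[
\PP\!\left( m_X \le y \,\Big|\, X(h) = w \right) \;=\; \exp\!\left( -\fracs{2}{h}\, y\,(y - w) \right), \qquad y \le \min(0,w),
\]
which is equivalent to saying that the right-hand side, evaluated at $y = m_X$, is uniformly distributed on $(0,1)$; call this variable $U_n$. Solving the quadratic $y(y-w) = -\fracs{1}{2} h \log U_n$ and selecting the smaller root (since $m_X \le \min(0,w)$) gives $m_X = \fracs{1}{2}\bigl( w - \sqrt{w^2 - 2 h \log U_n} \bigr)$. Substituting $b w = S_{n+1} - S_n$ and $b m_X = S_{n,\min} - S_n$ recovers the claimed expression for $S_{n,\min}$. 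The formula for $S_{n,\max}$ follows by the same argument applied to $-X$, producing an independent uniform $V_n$.

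I do not anticipate a serious obstacle: the reflection principle for the bridge is a textbook identity, and the algebraic inversion is routine. The only point that requires care is the drift-elimination step in the first paragraph of the argument, since this is what makes the final formula free of $a$; writing out the two-point conditional density explicitly, or citing Girsanov, suffices. The rest is bookkeeping on signs of the square root to match the lemma's conventions for min versus max.
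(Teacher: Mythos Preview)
The paper does not actually prove this lemma; it is stated in Section~2.2 as a standard Brownian-bridge result with a reference to Glasserman. Your derivation via the reflection principle and inverse-CDF sampling is the standard textbook route and is essentially correct.

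Two small points are worth tightening. First, your substitution $X(s)=(S(t_n+s)-S_n)/b$ and the claim $\min_t S(t) = S_n + b\,\min_s X(s)$ tacitly assume $b>0$; for $b<0$ the min and max swap. Since the final formula only involves $b^2$ this is harmless, but it should be said explicitly (or, more cleanly, start directly from equation~(\ref{eq:bridge}), which already expresses $S(t)$ as the linear interpolant plus $b$ times a driftless bridge, making your separate drift-elimination step unnecessary). Second, your remark that the argument for the maximum ``produces an independent uniform $V_n$'' is not right: the minimum and maximum of a Brownian bridge are dependent, so $U_n$ and $V_n$ are not independent of each other. The lemma only asserts that each is marginally uniform, which is all your argument delivers and all that is needed downstream.
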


\begin{lemma}
\label{thm:barrier1}
Provided $b\neq 0$, conditional on $S_n$ and $S_{n+1}$, the probability that the 
minimum (or maximum) of $S(t)$ over the interval $[t_n, t_{n+1}]$ is less than 
(or greater than) some value $B$, is
\begin{eqnarray*}
\hspace*{-0.2in}
\PP\left( \inf_{[t_n, t_{n+1}]} S(t) < B \ | \ S_n, S_{n+1}\right) 
&\!=\!& \exp\left(\frac{ - 2\, (S_n \!-\! B)^+(S_{n+1} \!-\! B)^+ }{b^2\, h} \right),
\hspace{0.2in} \\
\PP\left( \sup_{[t_n, t_{n+1}]} S(t) > B \ | \ S_n, S_{n+1}\right) 
&\!=\!& \exp\left(\frac{ - 2\, (B \!-\! S_n)^+(B \!-\! S_{n+1})^+ }{b^2\, h} \right).
\end{eqnarray*}
\end{lemma}

\begin{corollary}
\label{thm:barrier2}
If $W(t)$ is a Brownian motion with $W(0)\!=\!W(1)\!=\!0$, then for $x>0$
\[
\PP\left( \sup_{[0,1]} W(t) > x \right)\ =\ 
\PP\left( \inf_{[0,1]} W(t) < -x \right)\ =\  
\exp(- 2 x^2),
\]
and hence
$
\EE\left[ \sup_{[0,1]} |W(t)|^m \right]
$
is finite for all positive integers $m$.
\end{corollary}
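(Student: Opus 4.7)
The plan is to obtain the two probability identities as direct special cases of Lemma \ref{thm:barrier1}, and then deduce the moment bound from a standard tail integration.

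For the first part, I would take the SDE $\D S = a \, \D t + b \, \D W$ in its trivial constant-coefficient form with $a = 0$, $b = 1$, on the single interval $[t_n, t_{n+1}] = [0,1]$ of length $h = 1$, so that $S(t) = W(t)$. Conditioning on $S_n = W(0) = 0$ and $S_{n+1} = W(1) = 0$ converts the law of $W$ on $[0,1]$ into exactly the Brownian bridge in the statement. Applying the supremum formula (\ref{eq:barrier1b}) with $B = x > 0$ gives $(B - S_n)^+ (B - S_{n+1})^+ = x \cdot x = x^2$, yielding $\PP(\sup_{[0,1]} W(t) > x) = \exp(-2x^2)$. Applying (\ref{eq:barrier1}) with $B = -x < 0$ gives the companion identity $\PP(\inf_{[0,1]} W(t) < -x) = \exp(-2x^2)$; alternatively this follows from the symmetry $W \leftrightarrow -W$ of the bridge.

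For the moment bound, I would combine the two tail probabilities via a union bound: for $x > 0$,
\[
\PP\left( \sup_{[0,1]} |W(t)| > x \right) \;\leq\; \PP\left(\sup_{[0,1]} W(t) > x\right) + \PP\left(\inf_{[0,1]} W(t) < -x\right) \;=\; 2\exp(-2x^2).
\]
Then, using the layer-cake representation,
\[
\EE\left[ \sup_{[0,1]} |W(t)|^m \right] \;=\; \int_0^\infty m\, x^{m-1}\, \PP\left( \sup_{[0,1]} |W(t)| > x \right) \D x \;\leq\; \int_0^\infty 2m\, x^{m-1}\, e^{-2x^2}\, \D x,
\]
which is finite for every positive integer $m$ because of the Gaussian decay at infinity and the integrability of $x^{m-1}$ near zero.

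There is no real obstacle here: the work has already been done in Lemma \ref{thm:barrier1}, and the only non-automatic step is recognising that the constant-coefficient SDE on $[0,1]$ with $W(0)=W(1)=0$ reproduces exactly the Brownian bridge configuration needed. The tail-to-moment passage is entirely routine.
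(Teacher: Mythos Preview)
Your proposal is correct and follows exactly the route the paper intends: the corollary is stated without proof immediately after Lemma~\ref{thm:barrier1}, and your specialisation $a=0$, $b=1$, $h=1$, $S_n=S_{n+1}=0$ is precisely the reduction that makes it a corollary. The union bound plus layer-cake argument for the moment bound is the standard passage and is what the paper leaves implicit.
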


\subsubsection{Extreme values}

The following results come from extreme value theory which determine
the limiting distribution of the maximum of a large set of i.i.d.~random
variables \cite{ekm08}.

\begin{lemma}
If $U_n, n=1, \ldots, N$ are independent samples from a uniform distribution 
on the unit interval $[0,1]$, then for any positive integer $m$
\begin{equation}
\EE\left[ \max_n |\log U_n|^m \right] = \bO((\log N)^m), \mbox{ as } N\rightarrow \infty.
\end{equation}
\label{thm:extreme1}
\end{lemma}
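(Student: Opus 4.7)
The plan is to exploit the fact that if $U$ is uniform on $(0,1)$, then $X := -\log U$ is a standard exponential random variable, so that $|\log U_n| = -\log U_n$ are i.i.d.\ Exp$(1)$ variables. Writing $M_N := \max_n |\log U_n|$, the problem reduces to bounding the $m$th moment of the maximum of $N$ i.i.d.\ Exp$(1)$ variables, a classical extreme value computation in which $M_N$ concentrates around $\log N$.

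First I would express the moment through the layer-cake formula
$$
\EE[M_N^m] \;=\; \int_0^\infty m\, x^{m-1}\, \PP(M_N > x) \, \D x,
$$
and split the range of integration at $x = \log N$. On the range $[0,\log N]$ the trivial bound $\PP(M_N > x) \leq 1$ already yields a contribution of exactly $(\log N)^m$, which is precisely the target leading order.

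For $x > \log N$ I would apply the union bound $\PP(M_N > x) \leq N\, \PP(-\log U_1 > x) = N e^{-x}$, followed by the change of variable $u = x - \log N$, so that $N e^{-x} = e^{-u}$. The remaining integral
$$
\int_0^\infty m\, (u + \log N)^{m-1}\, e^{-u} \, \D u
$$
expands via the binomial theorem into a polynomial in $\log N$ of degree $m-1$, whose coefficients are finite gamma moments $\int_0^\infty u^k e^{-u} \D u = k!$; this contribution is therefore $\bO((\log N)^{m-1})$. Adding the two pieces gives $\EE[M_N^m] \leq (\log N)^m + \bO((\log N)^{m-1}) = \bO((\log N)^m)$, as claimed.

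There is no substantive obstacle here: the argument is essentially a one-line distributional observation (uniform $\mapsto$ exponential) coupled with a routine union-bound tail estimate. The only point deserving care is verifying that the polynomial expansion of $(u + \log N)^{m-1}$ in the tail integral produces only terms of order $(\log N)^{m-1}$ or lower, so that they are genuinely absorbed into the $\bO((\log N)^m)$ bound contributed by the bulk piece.
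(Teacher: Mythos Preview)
Your argument is correct. The distributional identification $-\log U \sim \mathrm{Exp}(1)$ is valid, the layer-cake formula and union bound are applied correctly, and the tail integral after the substitution $u = x - \log N$ is indeed a polynomial in $\log N$ of degree $m-1$ with coefficients $m\binom{m-1}{k}k!$, so the total is $(\log N)^m + \bO((\log N)^{m-1})$.

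As for comparison with the paper: the paper does not actually supply a proof of this lemma. It simply states the result as a consequence of extreme value theory, citing a standard reference. Your argument is therefore more explicit than the paper's treatment, and in fact more elementary: rather than appealing to the Gumbel limit for maxima of exponentials and then controlling moment convergence, you extract the moment bound directly from the exponential tail via the union bound. This is exactly the right level of machinery for the statement; nothing from the limiting-distribution theory is needed for a nonasymptotic $\bO((\log N)^m)$ bound.
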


\begin{lemma}
If $Z_n, n=1, \ldots, N$ are independent samples from a standard Normal distribution,
then for any positive integer $m$
\begin{equation}
\EE\left[ \max_n |Z_n|^m \right] = \bO( (\log N)^{m/2} ), \mbox{ as } N\rightarrow \infty.
\end{equation}
\label{thm:extreme2}
\end{lemma}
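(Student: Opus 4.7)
The plan is to prove this by combining the standard Gaussian tail bound with a union bound over the $N$ samples, and then splitting the tail integral at a carefully chosen threshold.

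First I would recall the layer-cake representation for the $m$-th moment of a nonnegative random variable:
\[
\EE\left[\max_n |Z_n|^m\right] = \int_0^\infty m\, x^{m-1}\, \PP\left(\max_n |Z_n| > x\right) \D x.
\]
For a standard Normal variable $Z$ one has the classical tail bound $\PP(|Z|>x) \leq 2\, e^{-x^2/2}$ for $x>0$, so by independence and the union bound
\[
\PP\left(\max_n |Z_n| > x\right) \leq \min\!\left( 1,\ 2N\, e^{-x^2/2} \right).
\]

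Next I would split the integral at the threshold $x_0 \equiv 2\sqrt{\log N}$, which is the scale on which the two bounds above cross over. On the interval $[0, x_0]$ I would use the trivial bound $\PP \leq 1$, yielding a contribution of at most $x_0^m = 2^m\, (\log N)^{m/2}$, already of the desired order. On $[x_0, \infty)$ I would use the Gaussian bound, obtaining
\[
\int_{x_0}^\infty m\, x^{m-1}\cdot 2N\, e^{-x^2/2} \D x.
\]
The substitution $u = x^2/2$ turns this into an incomplete gamma integral which, for $u_0 = x_0^2/2 = 2\log N$, is dominated by $(\log N)^{(m-2)/2}\, e^{-u_0} = (\log N)^{(m-2)/2}\, N^{-2}$; multiplying by the prefactor $N$ gives a tail contribution of order $(\log N)^{(m-2)/2}\, N^{-1}$, which is negligible compared with $(\log N)^{m/2}$ as $N\rightarrow\infty$.

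There is no real obstacle here; the argument is entirely standard and the only point requiring a small amount of care is choosing $x_0$ large enough that $N\, e^{-x_0^2/2}$ decays faster than any negative power of $\log N$, so that the upper tail integral is absorbed into the leading $(\log N)^{m/2}$ term coming from the bulk. Any choice $x_0 = c\sqrt{\log N}$ with $c>\sqrt{2}$ works equally well and yields the claimed $\bO\bigl((\log N)^{m/2}\bigr)$ bound.
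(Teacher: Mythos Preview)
Your argument is correct. The layer-cake representation combined with the Chernoff--union bound $\PP(\max_n|Z_n|>x)\le\min(1,2Ne^{-x^2/2})$ and the split at $x_0=2\sqrt{\log N}$ gives exactly the stated $\bO((\log N)^{m/2})$ bound; the incomplete-gamma asymptotics you invoke for the upper tail are also fine.

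As for comparison with the paper: the paper does not actually prove this lemma. It simply states the result as a consequence of classical extreme-value theory, with a reference to the literature. Your approach is therefore more self-contained than what the paper offers: rather than appealing to the Gumbel limiting distribution for normal maxima, you give a direct elementary tail estimate. The benefit of your route is that it requires nothing beyond the Gaussian Chernoff bound and a one-line integral split; the benefit of the extreme-value viewpoint is that it identifies the precise asymptotic constant (the maximum of $N$ standard normals concentrates around $\sqrt{2\log N}$), though that level of precision is not needed here.
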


\begin{corollary}
If $W_n(t), n=1, \ldots, N$ are independent Brownian paths on $[0,1]$, 
conditional on $W_n(0)=W_n(1)=0$, then for any positive integer $m$
\begin{equation}
\EE\left[ \max_n \sup_{[0,1]} |W_n(t)|^m \right] = \bO( (\log N)^{m/2} ), \mbox{ as } N\rightarrow \infty.
\end{equation}
\label{thm:extreme3}
\end{corollary}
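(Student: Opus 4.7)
The plan is to reduce the corollary to Lemma \ref{thm:extreme1} via the explicit distributional description of the extremes of a Brownian bridge provided by Corollary \ref{thm:barrier2}. First, I would note that since $\PP(\sup_{[0,1]} W_n(t) > x) = e^{-2x^2}$ matches $\PP(\sqrt{-\fracs{1}{2}\log U_n} > x)$ for $U_n \sim U(0,1)$, one obtains the distributional identity $\sup_{[0,1]} W_n(t) \stackrel{d}{=} \sqrt{-\fracs{1}{2}\log U_n}$, and by reflection symmetry $-\inf_{[0,1]} W_n(t) \stackrel{d}{=} \sqrt{-\fracs{1}{2}\log V_n}$ for another $V_n \sim U(0,1)$. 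The $U_n$ (and $V_n$) can be chosen jointly independent across $n$, inheriting independence from the paths $W_n$.

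The next step is to use $\sup_{[0,1]} |W_n(t)| = \max(\sup_{[0,1]} W_n(t), -\inf_{[0,1]} W_n(t))$ together with the elementary bound $\max(a,b)^m \leq a^m + b^m$ for $a,b \geq 0$ to split
\[
\max_n \sup_{[0,1]} |W_n(t)|^m \ \leq\ \max_n (\sup_{[0,1]} W_n(t))^m + \max_n (-\inf_{[0,1]} W_n(t))^m.
\]
After this, it would suffice to show $\EE[\max_n (\sup_{[0,1]} W_n(t))^m] = \bO((\log N)^{m/2})$, since the infimum term is handled identically.

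For the final step, I would apply Jensen's inequality (concavity of $\sqrt{\cdot}$) to pass to an integer exponent that Lemma \ref{thm:extreme1} accepts:
\[
\EE[\max_n (\sup_{[0,1]} W_n(t))^m] \ \leq\ \sqrt{\EE[\max_n (\sup_{[0,1]} W_n(t))^{2m}]} \ =\ 2^{-m/2}\sqrt{\EE[\max_n |\log U_n|^m]},
\]
where the distributional identity $(\sup_{[0,1]} W_n(t))^{2m} \stackrel{d}{=} 2^{-m} |\log U_n|^m$ was used together with the fact that the max commutes with monotone increasing functions. Lemma \ref{thm:extreme1} then gives $\EE[\max_n |\log U_n|^m] = \bO((\log N)^m)$, finishing the estimate.

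No step presents a real obstacle; the only subtlety worth flagging is that the square root in the distributional identity produces a half-integer exponent for odd $m$, which is precisely what the Jensen step above is designed to circumvent by promoting $m$ to the integer exponent $2m$ before invoking Lemma \ref{thm:extreme1}.
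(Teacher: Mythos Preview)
Your argument is correct. You reduce to Lemma~\ref{thm:extreme1} by exploiting the exact distributional identity $\sup_{[0,1]} W_n(t) \stackrel{d}{=} \sqrt{-\fracs{1}{2}\log U_n}$ that follows from Corollary~\ref{thm:barrier2}, then use a Jensen/Cauchy--Schwarz step to promote odd $m$ to the even exponent $2m$ so that the lemma's integer-exponent hypothesis applies.

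The paper takes a different and somewhat shorter route: it observes that for large $x$ the tail $\PP(\sup_{[0,1]}|W_n(t)|>x)\le 2e^{-2x^2}$ is dominated by the tail of a standard Normal, and then appeals to Lemma~\ref{thm:extreme2} (the Normal-variable extreme value bound) directly. That argument avoids the Jensen step and the bookkeeping with $U_n$ and $V_n$, but it relies on an implicit stochastic-domination-to-moment-bound passage that is left unstated. Your approach is more explicit and self-contained, and in particular makes clear exactly which moments are being controlled; the paper's approach is more conceptual but leaves more to the reader. Either way, one of the two extreme-value lemmas is doing the real work---you chose~\ref{thm:extreme1}, the paper chose~\ref{thm:extreme2}.
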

\begin{proof}
From Corollary \ref{thm:barrier2}, for sufficiently large $x$ the tail
probability for $|W_n(t)|$ is less than that of a standard Normal random variable.
\end{proof}

\subsubsection{Extreme paths}

Some of the proofs in \cite{ghm09} use an argument that certain 
``extreme'' paths make a negligible contribution to the overall 
expectation.  This same argument will be employed in this paper
but in a more compact form based on these two lemmas.

\begin{lemma}
\label{thm:ep1}
If $X_\ell$ is a scalar random variable defined on level $\ell$ 
of the multilevel analysis, and for each positive integer $m$,
$\EE[ \, |X_\ell|^m]$ is uniformly bounded, then, for any $\delta>0$,
\[
\PP\left( \, |X_\ell| > h_\ell^{-\delta} \right) = \bo(h_\ell^p), \qquad \forall p>0.
\]
\end{lemma}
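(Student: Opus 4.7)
The plan is to use Markov's (or Chebyshev's) inequality at an arbitrarily high moment. For any positive integer $m$, Markov's inequality gives
\[
\PP\bigl(|X_\ell| > h_\ell^{-\delta}\bigr)
\ =\ \PP\bigl(|X_\ell|^m > h_\ell^{-\delta m}\bigr)
\ \leq\ h_\ell^{\delta m}\, \EE\bigl[\,|X_\ell|^m\bigr].
\]
By hypothesis there is a constant $C_m$, independent of $\ell$, with $\EE[|X_\ell|^m] \leq C_m$, so
\[
\PP\bigl(|X_\ell| > h_\ell^{-\delta}\bigr) \leq C_m\, h_\ell^{\delta m}.
\]

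Now, given any $p>0$, I would choose $m$ to be an integer strictly larger than $p/\delta$, so that $\delta m > p$. Since $h_\ell = 2^{-\ell}T \to 0$ as $\ell\to\infty$, we have $h_\ell^{\delta m} / h_\ell^p = h_\ell^{\delta m - p} \to 0$, hence $C_m\, h_\ell^{\delta m} = \bo(h_\ell^p)$, which yields the claim.

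There is no real obstacle here: the only subtle point is that the bound produced by Markov's inequality is of order $h_\ell^{\delta m}$, which is strictly stronger than $h_\ell^p$ for the chosen $m$, so we obtain little-o rather than just big-O. Uniform boundedness of all moments is essential so that the constant $C_m$ can be selected independently of $\ell$ after $m$ has been fixed in terms of $p$.
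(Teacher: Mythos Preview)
Your proof is correct and takes essentially the same approach as the paper: apply Markov's inequality to $|X_\ell|^m$ to get $\PP(|X_\ell|\geq h_\ell^{-\delta}) \leq h_\ell^{m\delta}\,\EE[|X_\ell|^m]$, and then choose $m>p/\delta$. The paper's version is just a terser statement of exactly this argument.
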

\begin{proof}
Follows immediately from Markov's inequality 
\[
\PP\left( |X_\ell| \geq h_\ell^{-\delta}  \right)\ = \
\PP\left( |X_\ell|^m \geq h_\ell^{-m \delta} \right)\ \leq\ h_\ell^{m \delta}\ \EE[\,|X_\ell|^m],
\]
by choosing $m > p/\delta$.
\end{proof}

\begin{lemma}
\label{thm:ep2}
If $Y_\ell$ is a scalar random variable on level $\ell$,
$\EE[\, Y_\ell^2 ]$ is uniformly bounded, 
and for each $p\!>\!0$, the indicator function $\one_{E_\ell}$
on level $\ell$ (which takes value 1 or 0 depending whether or not
a path lies within some set $E_\ell$) satisfies
\[
\EE[ \one_{E_\ell} ] = \bo(h_\ell^p),
\]
then for each $p\!>\!0$,
\[
\EE[\, |Y_\ell|\, \one_{E_\ell} ] = \bo(h_\ell^p).
\]
\end{lemma}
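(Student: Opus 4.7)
The plan is to reduce the claim to a direct application of the Cauchy--Schwarz inequality, exploiting that the indicator function squared equals itself. Since the hypothesis allows us to choose the exponent in $\EE[\one_{E_\ell}]=\bo(h_\ell^q)$ arbitrarily large, we will simply pick a sufficiently large $q$ to beat any prescribed $p$ after taking square roots.

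Concretely, I would fix an arbitrary $p>0$ and write
\[
\EE[\,|Y_\ell|\, \one_{E_\ell}\,] \;\leq\; \bigl(\EE[\,Y_\ell^{2}\,]\bigr)^{1/2}\,\bigl(\EE[\,\one_{E_\ell}^{2}\,]\bigr)^{1/2} \;=\; \bigl(\EE[\,Y_\ell^{2}\,]\bigr)^{1/2}\,\bigl(\EE[\,\one_{E_\ell}\,]\bigr)^{1/2},
\]
using $\one_{E_\ell}^2=\one_{E_\ell}$. The first factor is bounded by a constant by assumption. Applying the hypothesis with exponent $2p$ in place of $p$ gives $\EE[\one_{E_\ell}]=\bo(h_\ell^{2p})$, so the second factor is $\bo(h_\ell^{p})$, which is exactly what is claimed.

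There is essentially no obstacle: the lemma is a quantitative form of uniform integrability, and Cauchy--Schwarz does all the work. The only thing worth remarking is that the hypothesis is stated with the freedom to pick $p$ arbitrarily large, which is precisely what allows us to absorb the loss of a factor $1/2$ coming from the square root. If instead one had only a bound of the form $\EE[\one_{E_\ell}]=\bO(h_\ell^{p_0})$ for a single fixed $p_0$, the same argument would yield $\EE[|Y_\ell|\one_{E_\ell}]=\bO(h_\ell^{p_0/2})$, but that refined variant is not needed here.
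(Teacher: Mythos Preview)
Your proof is correct and essentially identical to the paper's own argument: the paper also applies the Cauchy--Schwarz (H\"older) inequality to obtain $\EE[\,|Y_\ell|\one_{E_\ell}] \leq (\EE[Y_\ell^2])^{1/2}(\EE[\one_{E_\ell}])^{1/2}$ and then uses the freedom in the exponent of the hypothesis to conclude. Your additional remarks about $\one_{E_\ell}^2=\one_{E_\ell}$ and the need to take $2p$ in the hypothesis make explicit what the paper leaves implicit.
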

\begin{proof}
Immediate consequence of H\"{o}lder inequality which gives
\[
\EE[\, |Y_\ell| \one_{E_\ell} ] \leq 
\left( \EE[\, Y_\ell^2]  \right)^{1/2}
\left( \EE[\, \one_{E_\ell} ] \right)^{1/2}.
\]
\end{proof}

In the proofs in the main analysis, Lemma \ref{thm:ep1} will be 
used to establish the pre-conditions for Lemma \ref{thm:ep2}, 
from which it can be concluded, by choosing $p$ sufficiently large,
that the contribution of the extreme paths is negligible compared
to the paths that are not extreme.

\subsection{Analysis of the Milstein MLMC method}

\subsubsection{Brownian interpolation}

In all of the cases to be analysed, the discrete paths are simulated
using the Milstein method, with each level having twice as many timesteps 
as the previous level.  This gives a set of values at discrete times, 
$\hS_n \equiv \hS(t_n)$ where $t_n = n\, h$.  By approximating the drift 
and volatility as being constant within each timestep, we define the 
following Brownian interpolation based on equation (\ref{eq:bridge}), 
\begin{equation}
\hS(t) = \hS_n + \lambda(t)\,  (\hS_{n+1}\!-\!\hS_n) 
+ b_n\, \left(\rule{0in}{0.16in} W(t) - W_n - \lambda(t)\, (W_{n+1}\!-\!W_n) \right)
\label{eq:bridge2}
\end{equation}
where again $\lambda(t) \equiv (t - t_n)/h$. 
The advantage of this interpolation compared to the standard Kloeden-Platen 
interpolant is that we can use Lemmas \ref{thm:asian1} -- \ref{thm:barrier1}
in constructing the multilevel estimators.  The accuracy of the interpolant 
relative to the Kloeden-Platen interpolant is given by the following theorem:

\begin{theorem}\label{thm:KPBB}
If $\hS(t)$ is the interpolant defined by (\ref{eq:bridge2}) and $\hS_{KP}(t)$ is
the Kloeden-Platen interpolant defined by (\ref{eq:KP}) then for any positive
integer $m$
\begin{enumerate}
\item[i)]
\[\EE\left[\sup_{[0,T]} \left|\hS(t)-\hS_{KP}(t)\right|^m \right]= \bO( (h \log h)^m),\]

\item[ii)]
\[\sup_{[0,T]}\EE\left[\, \left|\hS(t)-\hS_{KP}(t)\right|^m \right]= \bO(h^m),\]

\item[iii)]
\[
\EE\left[\left(\int_0^T(\hS(t)-\hS_{KP}(t))~\D t\right)^2\right]
=\bO(h^3).
\]
\end{enumerate}
\end{theorem}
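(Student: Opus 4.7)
The plan is to start from a closed-form expression for the difference on each timestep. Substituting the Milstein update $\hS_{n+1}-\hS_n = a_n h + b_n \Delta W_n + \fracs{1}{2} b'_n b_n ((\Delta W_n)^2 - h)$ into (\ref{eq:bridge2}), the drift term and the linear $b_n(W(t)\!-\!W_n)$ term cancel exactly against those in $\hS_{KP}(t)$, leaving for $t \in [t_n, t_{n+1}]$
\[
\hS(t) - \hS_{KP}(t) = \fracs{1}{2}\, b'_n b_n \left[\lambda\,(\Delta W_n)^2 - (W(t)\!-\!W_n)^2\right], \qquad \lambda = (t - t_n)/h.
\]
Writing the standard decomposition $W(t) - W_n = \lambda\, \Delta W_n + B_n(t)$, where $B_n$ is the Brownian bridge from $0$ to $0$ on $[t_n, t_{n+1}]$ independent of $\Delta W_n$, the bracket simplifies to
\[
\lambda(1-\lambda)(\Delta W_n)^2 - 2\lambda\, \Delta W_n\, B_n(t) - B_n(t)^2.
\]
Assumption A2 gives $|b'_n b_n| = |L_1 b(\hS_n,t_n)| \leq K_2(1+|\hS_n|)$, and Theorem \ref{thm:KP} then shows that the moments of $\max_n |b'_n b_n|$ of every order are uniformly bounded in $h$.

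Part (ii) is essentially immediate: for fixed $t$, each of the three bracket terms has $L^m$ norm of order $h$, since $\EE[(\Delta W_n)^{2m}] = \bO(h^m)$, $\EE[B_n(t)^{2m}] = \bO(h^m)$, and Cauchy-Schwarz handles the cross term. Multiplying by the bounded moments of $b'_n b_n$ via Cauchy-Schwarz yields $\bO(h^m)$ uniformly in $t$.

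For (i) I take the pathwise supremum inside the bracket,
\[
\sup_{[0,T]} \bigl|\hS(t) - \hS_{KP}(t)\bigr| \leq \fracs{1}{2}\max_n |b'_n b_n| \Bigl(\max_n (\Delta W_n)^2 + 2\max_n |\Delta W_n|\cdot M + M^2\Bigr),
\]
with $M = \max_n \sup_{[t_n,t_{n+1}]} |B_n(t)|$. With $N = T/h$ timesteps, Lemma \ref{thm:extreme2} applied to $\Delta W_n / \sqrt{h}$ gives $\EE[\max_n |\Delta W_n|^m] = \bO((h\log h)^{m/2})$, and Corollary \ref{thm:extreme3} (after rescaling each bridge to $[0,1]$) yields the same order for $M$. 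Raising to the $m$-th power and invoking Cauchy-Schwarz to peel off $\max_n |b'_n b_n|$ produces the claimed $\bO((h\log h)^m)$ bound.

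The main obstacle is (iii): the pointwise bound from (ii) combined with Minkowski's inequality only yields $\bO(h^2)$ for the integral, one factor of $h$ short. The gain comes from a martingale cancellation. Writing $\int_0^T (\hS - \hS_{KP})\,\D t = \sum_n X_n$ with $X_n = \int_{t_n}^{t_{n+1}} (\hS - \hS_{KP})\,\D t$, and using that conditional on $\mathcal{F}_{t_n}$ one has $\EE[(\Delta W_n)^2] = h$, $\EE[B_n(t)^2] = \lambda(1-\lambda)h$ and $\EE[\Delta W_n\, B_n(t)] = 0$, the three bracket terms integrate to $\lambda(1-\lambda)h$, $-\lambda(1-\lambda)h$ and $0$, so $\EE[X_n \mid \mathcal{F}_{t_n}] = 0$. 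Hence $\{X_n\}$ is a martingale difference sequence with respect to $\{\mathcal{F}_{t_n}\}$, all cross terms vanish, and $\EE[(\sum_n X_n)^2] = \sum_n \EE[X_n^2]$. For each summand, Minkowski and Cauchy-Schwarz give $\|X_n\|_2 \leq \fracs{1}{2} \|b'_n b_n\|_4 \int_{t_n}^{t_{n+1}} \|\text{bracket}\|_4\, \D t = \bO(h \cdot h) = \bO(h^2)$, so $\EE[X_n^2] = \bO(h^4)$; summing over $N = T/h$ timesteps delivers the required $\bO(h^3)$.
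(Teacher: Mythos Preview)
Your proof is correct and follows essentially the same route as the paper: the same closed-form for $\hS(t)-\hS_{KP}(t)=\fracs{1}{2}b'_n b_n\,Y(t)$, the same bridge decomposition of $Y$, extreme-value lemmas for (i), and the key mean-zero/orthogonality observation for (iii) (the paper phrases it as independence of $X_n$ from $\mathcal{F}_{t_n}$ with $\EE[X_n]=0$, you phrase it as a martingale-difference sequence, which is the same thing). The only cosmetic difference is in (ii), where the paper substitutes $W(t)-W_n=\sqrt{\lambda h}\,Z_1$, $W_{n+1}-W(t)=\sqrt{(1-\lambda)h}\,Z_2$ to obtain the pointwise bound $|Y|\le h\max(Z_1^2,Z_2^2)$, whereas you bound the $L^m$ norm of each bracket term directly; both give the same $\bO(h^m)$.
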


\vspace{0.1in}

\begin{proof}
In each case we use the fact that $\EE[\max_n |b_n'b_n|^m]$ is finite 
for all positive integers $m$ 
due to Theorem \ref{thm:KP} and Assumption A2.  In addition, for 
$t\in [t_n,t_{n+1}]$, the difference between the two interpolants is
\[
\hS(t)-\hS_{KP}(t) =\fracs{1}{2}\, b_n' b_n\, Y(t),
\]
where
\begin{eqnarray*}
Y(t) &=& \lambda(t)\, (W_{n+1}\!-\!W_n)^2 - (W(t)\!-\!W_n)^2 \\
     &=& \lambda(t)\, (1\!-\!\lambda(t))\, (W_{n+1}\!-\!W_n)^2 - 
         \left(\rule{0in}{0.13in} W(t) - W_n - \lambda(t)\,(W_{n+1}\!-\!W_n) \right)^2\\
     & & -\ 2\, \lambda(t)\, (W_{n+1}\!-\!W_n)
         \left(\rule{0in}{0.13in} W(t) - W_n - \lambda(t)\,(W_{n+1}\!-\!W_n) \right)\, .
\end{eqnarray*}

\begin{enumerate}
\item[i)] Using H{\"o}lder's inequality, the assertion follows from
\[
\EE\left[\sup_{[0,T]} \left|\hS(t)-\hS_{KP}(t)\right|^m \right]
\leq 2^{-m}\sqrt{ \EE\left[\max_n |b_n'b_n|^{2m}\right] \ 
\EE\left[ \sup_{[0,T]} |Y(t)|^{2m} \right] },
\]
together with bounds on $\EE\left[ \sup_{[0,T]} |Y(t)|^{2m} \right]$ coming 
from Lemma \ref{thm:extreme2} and Corollary \ref{thm:extreme3}.

\item[ii)] 
By setting $W(t)\!-\!W_n = \sqrt{\lambda(t)\, h}\ Z_1$ and 
$W_{n+1}\!-\!W(t) = \sqrt{(1\!-\!\lambda(t))\, h}\ Z_2$, with $Z_1, Z_2$ independent 
standard Normal random variables, one can prove that 
$|Y| \leq h \max(Z_1^2, Z_2^2)$ and hence
$|Y|^m \leq h^m (Z_1^{2m} \!+\! Z_2^{2m})$.
The assertion then follows from
\[
\EE\left[\, \left|\hS(t)-\hS_{KP}(t)\right|^m\right] = 2^{-m} \EE[\,|b_n'b_n|^m]~\EE[\,|Y|^m],
\]
and standard results for moments of Normal random variables.

\item[iii)]  Defining
$\displaystyle
X_n:=\int_{t_n}^{t_{n+1}}Y(t)~\D t
$
we obtain
\[
\EE\left[\left(\int_0^T(\hS(t)-\hS_{KP}(t))~\D t\right)^2\right]
=\fracs{1}{4} \EE\left[ \left(
\sum_{n=0}^{N-1}b_n'b_nX_n\right)^2\right].
\]
For $n\!>\!m$, $\EE[b_m'b_mX_mb_n'b_nX_n]=0$ since $X_n$ is independent of 
$b_m'b_mX_mb_n'b_n$ and $\EE[X_n]=0$. 
In addition, the $X_n$ are iid random variables, and therefore
\[
\EE\left[\left(\int_0^T(\hS(t)-\hS_{KP}(t))~\D t\right)^2\right]
= \fracs{1}{4} \EE[X_0^2]\sum_{n=0}^{N-1}\EE[(b_n'b_n)^2].
\]
The proof is completed by noting that $\EE[X_0^2]=\bO(h^4)$ due to 
standard results for moments of Brownian increments.
\end{enumerate}
\end{proof}

\subsubsection{Estimator construction}

For each Brownian input, the multilevel estimator 
(\ref{eq:est_l}) requires the calculation of the payoff difference
$
\hP_\ell^f - \hP_{\ell-1}^c.
$
Here $\hP_\ell^f$ is a fine-path estimate using timestep $h_\ell \!=\! 2^{-\ell}T$, 
and $\hP_{\ell-1}^c$ is the corresponding coarse-path estimate using timestep 
$h \!=\! 2^{-(\ell-1)}T$.  As explained in \cite{giles08b}, to ensure that 
the identity (\ref{eq:identity}) is correctly respected, it is required
that \begin{equation}
\EE[\hP_{\ell-1}^f] = \EE[\hP_{\ell-1}^c].
\label{eq:equality}
\end{equation}
In the simplest case of a European option, this can be achieved 
very simply by defining $\hP_{\ell-1}^f$ and $\hP_{\ell-1}^c$ to be the same.
However, for the other applications the definition of $\hP_{\ell-1}^c$ 
involves information from the discrete simulation of $\hP_\ell^f$, 
which is not available in computing $\hP_{\ell-1}^f$.  This is done to 
reduce the variance of the estimator, but it must be shown that 
equality (\ref{eq:equality}) is satisfied.  This will be achieved 
in each case through a construction based on the Brownian interpolant.
In many cases this will involve evaluating the coarse path interpolant 
at the intermediate times $t_n$ for odd values of $n$, using the value 
for $W_n$ which was used for the fine path.

The analysis of the variance of the multilevel estimator will
often use the following decomposition of the difference
between the Brownian interpolants for the fine and coarse paths,
\begin{eqnarray}
 \hS^f(t) - \hS^c(t) &=& (\hS^f(t)\!-\!\hS^f_{KP}(t))
                    \ -\ (\hS^c(t)\!-\!\hS^c_{KP}(t)) \nonumber
 \\ && \!\!           +\ (\hS^f_{KP}(t)\!-\!S(t))
                    \ -\ (\hS^c_{KP}(t)\!-\!S(t))
\label{eq:decomposition}
\end{eqnarray}
with Theorem \ref{thm:KPBB} bounding the error in the first two terms, 
and Theorem \ref{thm:KP}  bounding the error in the last two terms.

\subsubsection{Lipschitz payoffs}
\label{sec:Lipschitz}

Many European options, such as simple put and call options, have a payoff 
that is a Lipschitz function of the value of the underlying asset at maturity,
$
P = f(S(T)).
$
Discrete Asian options have a payoff which is a Lipschitz function
of the value at maturity and the average of the underlying asset at a 
finite number of times $T_m$,
\[
\overline{S} = M^{-1} \sum_{m=1}^M S(T_m).
\]

Both of these are special cases of a more general class of Lipschitz 
payoffs in which the payoff is a Lipschitz function of the 
values of the underlying asset at a finite number of times $T_m$,
$
P = f(S(T_1), S(T_2), \ldots, S(T_M)),
$
with the Lipschitz bound
\[
\left| f(S^{(2)}_1, S^{(2)}_2, \ldots, S^{(2)}_M)
     - f(S^{(1)}_1, S^{(1)}_2, \ldots, S^{(1)}_M) \right|
\leq L \sum_{m=1}^M \left| S^{(2)}_m - S^{(1)}_m \right|,
\]
for some constant $L$.  In the numerical discretisation the fine 
and coarse path payoffs are both defined by
$
\hP = f(\hS(T_1), \hS(T_2), \ldots, \hS(T_M)),
$
with $\hS(t)$ given by the Brownian interpolation.  Note that
this will require the additional simulation of $W(T_m)$ if 
$T_m$ does not correspond to one of the existing timesteps.

We get the following result for the variance
of the multilevel estimator:

\begin{theorem}
For Lipschitz payoffs, $V_\ell = \bO(h_\ell^2)$.
\end{theorem}

\begin{proof}
From the Lipschitz bound and Jensen's inequality we obtain
\[
\VV[  \hP_\ell^{f} - \hP_{\ell-1}^{c}    ]\ \leq\ 
\EE[ (\hP_\ell^{f} - \hP_{\ell-1}^{c})^2 ]\ \leq\ 
L^2 M \sum_{m=1}^M \EE [ ( \hS^{f}(T_m) - \hS^{c}(T_m) )^2 ].
\]
The decomposition (\ref{eq:decomposition}) implies that
\begin{eqnarray*}
\lefteqn{ \EE[ (\hS^f(T_m) - \hS^c(T_m))^2 ]} \\
 &\!\leq\!& 4\left(
                            \EE[ (\hS^f(T_m)\!-\!\hS^f_{KP}(T_m))^2 ]
                         +\ \EE[ (\hS^c(T_m)\!-\!\hS^c_{KP}(T_m))^2 ]
\right. \\ && \left. \ \ +\ \EE[ (\hS^f_{KP}(T_m)\!-\!S(T_m))^2 ]
                          + \EE[ (\hS^c_{KP}(T_m)\!-\!S(T_m))^2 ]\, \right)
\end{eqnarray*}
and the proof is completed using Theorems \ref{thm:KP} and \ref{thm:KPBB}.
\end{proof}

\subsubsection{Asian options}

Continuously monitored Asian options have a payoff that is a uniform Lipschitz function 
of two arguments, the average over the time interval
\[
\overline{S} \equiv T^{-1} \int_0^T S(t) \ \D t,
\]
and the value at maturity, $S(T)$.

The numerical approximation follows the approach used in \cite{giles08b}, 
in which the fine and coarse path averages $\overline{\hS}$ are defined 
by integrating the interpolant (\ref{eq:bridge2}).  
Because of Lemma \ref{thm:asian1}, this gives
\[
\int_0^T \hS^f(t) \, \D t = \sum_{n=0}^{N-1}
 \fracs{1}{2} h_\ell \left( \hS^f_n + \hS^f_{n+1} \right)
+ b_n\, I^f_n
\]
where $I^f_n$ are independent $N(0,\fracs{1}{12}h_\ell^3)$ variables.
The payoff for the coarse path is defined similarly, but a straightforward 
calculation gives 
\begin{eqnarray*}
I^c_n &\equiv& \int_{t_n}^{t_{n+2}}  \left( W(t) - W_n - \frac{t-t_n}{2h_\ell}\, (W_{n+2}\!-\!W_n) \right)\, \D t
\\ &=& I^f_n + I^f_{n+1} - \fracs{1}{2} \, h_\ell\,( W_{n+2} - 2 W_{n+1} + W_n ).
\end{eqnarray*}
so $I^c_n$ is obtained from the Brownian path data used for the fine path.

\begin{theorem}
This approximation for continuous Asian payoffs has $V_\ell = \bO(h_\ell^2)$.
\label{thm:asian2}
\end{theorem}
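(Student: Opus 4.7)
The plan is to apply the uniform Lipschitz property of the payoff $f$ in its two arguments $(\overline{S},S(T))$ to reduce the variance estimate to two pieces:
\[
\EE\!\left[(\hP_\ell^f - \hP_{\ell-1}^c)^2\right]
\;\leq\; C\, \EE\!\left[(\hS^f(T) - \hS^c(T))^2\right]
 + C\, T^{-2}\, \EE\!\left[\Big(\textstyle\int_0^T (\hS^f(t) - \hS^c(t))\, \D t\Big)^2\right],
\]
and then show that each term on the right is $\bO(h_\ell^2)$. Before estimating, I would first verify that condition (\ref{eq:equality}) holds for this fine--coarse pairing: the expression $I_n^c = I_n^f + I_{n+1}^f - \fracs{1}{2} h_\ell(W_{n+2}-2W_{n+1}+W_n)$ coincides by direct calculation with the Brownian-bridge integral over $[t_n,t_{n+2}]$ and therefore has the $N(0,\fracs{1}{12}h_{\ell-1}^3)$ marginal distribution, independent of $W_{n+2}-W_n$, matching the noise generated independently in the fine-path simulation on level $\ell-1$.

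For the terminal-value term I would apply the decomposition (\ref{eq:decomposition}) at $t=T$ together with $(a+b+c+d)^2 \leq 4(a^2+b^2+c^2+d^2)$, bounding the two $\hS-\hS_{KP}$ contributions via Theorem \ref{thm:KPBB}(ii) and the two $\hS_{KP}-S$ contributions via Theorem \ref{thm:KP}; each of the four pieces contributes $\bO(h_\ell^2)$ in second moment.

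The integrated term is the main step. I would again split via (\ref{eq:decomposition}) and the same convexity inequality, but treat the four pieces asymmetrically: the two Kloeden--Platen errors $\hS^f_{KP}-S$ and $\hS^c_{KP}-S$ are handled by the crude estimate $\big(\textstyle\int_0^T u(t)\,\D t\big)^2 \leq T^2 \sup_{[0,T]} |u(t)|^2$ followed by Theorem \ref{thm:KP}, producing $\bO(h_\ell^2)$; the two $\hS-\hS_{KP}$ contributions I would bound using part (iii) of Theorem \ref{thm:KPBB} \emph{directly}, which gives the sharper $\bO(h_\ell^3)$ thanks to the centred martingale-difference structure of $\sum_n b_n'b_n X_n$ exploited in its proof. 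The hard part is precisely this last step: naively combining Theorem \ref{thm:KPBB}(i) with $\int \leq T \sup$ would cost an extra $(\log h)^2$ factor, so the martingale cancellation built into part (iii) is essential here. Assembling the terminal and integrated bounds yields $V_\ell \leq \EE[(\hP_\ell^f - \hP_{\ell-1}^c)^2] = \bO(h_\ell^2)$, as claimed.
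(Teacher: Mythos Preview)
Your proposal is correct and follows essentially the same route as the paper: reduce via the Lipschitz bound to the terminal-value and average differences, apply the decomposition (\ref{eq:decomposition}), and control the four pieces using Theorem~\ref{thm:KP} for the $\hS_{KP}-S$ terms and Theorem~\ref{thm:KPBB} for the $\hS-\hS_{KP}$ terms. The paper's proof is terser---it simply states ``the proof is completed using the Lipschitz bound and Theorems~\ref{thm:KP} and~\ref{thm:KPBB}''---but you have correctly unpacked that the integral part relies specifically on Theorem~\ref{thm:KPBB}(iii), and your observation that part~(i) alone would cost a spurious $(\log h)^2$ is exactly the reason part~(iii) exists.
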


\begin{proof}
Integrating (\ref{eq:decomposition}) gives
\begin{eqnarray*}
 \EE[ (\overline{\hS^f} - \overline{\hS^c})^2 ] &\leq& 4\left(
                 \  \EE[ (\overline{\hS^f}\!-\!\overline{\hS^f_{KP}})^2 ]
               \ +\ \EE[ (\overline{\hS^c}\!-\!\overline{\hS^c_{KP}})^2 ]
 \right. \\ && \left. \ \ 
               \ +\ \EE[ (\overline{\hS^f_{KP}}\!-\!\overline{S})^2 ]
               \ +\ \EE[ (\overline{\hS^c_{KP}}\!-\!\overline{S})^2 ]\ \right),
\end{eqnarray*}
and the proof is completed using the Lipschitz bound and Theorems \ref{thm:KP} and \ref{thm:KPBB}.
\end{proof}

\subsubsection{Lookback options}

In lookback options the payoff is a uniform Lipschitz function of the value 
of the underlying at maturity $S(T)$, and either the minimum or the maximum 
of the underlying over the time interval.  We will consider cases involving 
the minimum; the analysis for cases involving the maximum is very similar.

For the fine path simulation, we consider the conditional Brownian 
interpolation in the time interval $[t_n,t_{n+1}]$ defined by
\[
\hS^f(t) = \hS^f_n + \lambda(t)\,  (\hS^f_{n+1}\!-\!\hS^f_n) + 
b^f_n \left( W(t) - W_n - \lambda(t)\, (W_{n+1}\!-\!W_n) \right)
\]
where $\lambda(t) = (t-t_n)/h_\ell$ and $b^f_n\equiv b(\hS^f_n,t_n)$, and make use 
of Lemma \ref{thm:lookback1} to simulate the minimum on the time interval 
as
\begin{equation}
\hS^f_{n,min} = \fracs{1}{2} \left( \hS^f_n + \hS^f_{n+1} 
- \sqrt{ \left(\hS^f_{n+1} \!-\! \hS^f_n\right)^2 - 2\, (b^f_n)^2 \, h_\ell \log U_n }\ \right),
\label{eq:lookback1}
\end{equation}
where $U_n$ is a uniform random variable on the unit interval. Taking the minimum 
over all timesteps gives the global minimum which is used to compute the fine 
path value $\hP_\ell^f$.

For the coarse path value $\hP_{\ell-1}^c$, we do something slightly different.
Using the same conditional Brownian interpolation, for even $n$ we again use 
equation (\ref{eq:bridge2}) to define $\hS^c_{n+1}$.
The minimum value over the interval $[t_n,t_{n+2}]$ can then be taken to be the 
smaller of the minima for the two intervals $[t_n,t_{n+1}]$ and $[t_{n+1},t_{n+2}]$,
\begin{eqnarray}
\hS^c_{n,min}   &\!\!\!=\!\!& \fracs{1}{2}\! \left( \hS^c_n \!+\! \hS^c_{n+1} 
- \sqrt{ \left(\hS^c_{n+1} \!-\! \hS^c_n\right)^2 - 2\,(b^c_n)^2 \, h_\ell \log U_n }\ \right),
\nonumber \\
\hS^c_{n+1,min} &\!\!\!=\!\!& \!\fracs{1}{2}\! \left( \hS^c_{n+1} \!+\! \hS^c_{n+2} 
- \sqrt{\! \left(\hS^c_{n+2} \!-\! \hS^c_{n+1}\right)^2\!\! - 2\, (b^c_{n+1})^2 \, h_\ell \log U_{n+1} }\, \right).
\nonumber \\
\label{eq:lookback2}
\end{eqnarray}
Here $b^c_n =b^c_{n+1} \equiv b(\hS^c_n,t_n)$.  
Note the re-use of the same uniform random numbers $U_n$ and $U_{n+1}$ used to compute 
the fine path minimum.  Also, $\min(\hS^c_{n,min},\hS^c_{n+1,min})$ for level
$\ell$ has exactly the same distribution as $\hS^f_{n/2,min}$ for level $\ell\!-\!1$, 
since they are both based on the same approximate Brownian interpolation, and 
therefore equality (\ref{eq:equality}) is satisfied.

\begin{theorem}
\label{thm:lookback}
The multilevel approximation for a lookback option which is a uniform Lipschitz function of 
$S(T)$ and $\inf_{[0,T]} S(t)$ has $V_\ell = \bO(h_\ell^2 (\log h_\ell)^2)$.
\end{theorem}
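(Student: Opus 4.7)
The plan is to use the uniform Lipschitz property of the payoff to reduce the variance bound to two estimates: $\EE[(\hS^f(T) - \hS^c(T))^2]$, which is $\bO(h_\ell^2)$ by decomposition (\ref{eq:decomposition}) together with Theorems \ref{thm:KP} and \ref{thm:KPBB}(ii), and $\EE[(\hS^f_{min} - \hS^c_{min})^2]$, where the logarithmic factor arises. Since the fine and coarse global minima are taken over the same $N = T/h_\ell$ intervals (the coarse per-interval minima being defined at odd midpoints via the Brownian interpolation (\ref{eq:bridge2})), I would use $|\min_n a_n - \min_n b_n| \leq \max_n |a_n - b_n|$ to reduce to bounding $\EE[\max_n (\hS^f_{n,min} - \hS^c_{n,min})^2]$. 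Reading off (\ref{eq:lookback1})--(\ref{eq:lookback2}),
\[
2(\hS^f_{n,min} - \hS^c_{n,min}) = (\hS^f_n - \hS^c_n) + (\hS^f_{n+1} - \hS^c_{n+1}) - (\sqrt{A_n} - \sqrt{B_n}),
\]
with $A_n = (\hS^f_{n+1}-\hS^f_n)^2 - 2(b^f_n)^2 h_\ell \log U_n$ and $B_n$ the analogous coarse quantity (both using the same $U_n$). The first two differences contribute $\bO(h_\ell^2 (\log h_\ell)^2)$ to $\EE[\max_n(\cdot)^2]$ via $\EE[\sup_{[0,T]} |\hS^f - \hS^c|^2] = \bO((h_\ell \log h_\ell)^2)$, obtained from (\ref{eq:decomposition}) together with Theorems \ref{thm:KP} and \ref{thm:KPBB}(i).

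The main obstacle is the square-root difference, because the naive bound $(\sqrt{A_n}-\sqrt{B_n})^2 \leq |A_n-B_n|$ is far too loose while the conjugate identity $(\sqrt{A_n}-\sqrt{B_n})^2 = (A_n-B_n)^2/(\sqrt{A_n}+\sqrt{B_n})^2$ has a denominator that becomes singular when $|\log U_n|$ or the volatility coefficients are small. To dodge these singularities I would insert an intermediate term,
\[
\sqrt{A_n} - \sqrt{B_n} = \bigl[\sqrt{D_f^2 + E_f} - \sqrt{D_c^2 + E_f}\bigr] + \bigl[\sqrt{D_c^2 + E_f} - \sqrt{D_c^2 + E_c}\bigr] =: T_1 + T_2,
\]
with $D_f = \hS^f_{n+1}-\hS^f_n$, $E_f = -2(b^f_n)^2 h_\ell \log U_n$, and analogous coarse quantities. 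For $T_1$, the conjugate identity combined with $\sqrt{D_f^2+E_f}+\sqrt{D_c^2+E_f} \geq |D_f|+|D_c|$ cancels the common factor in $|D_f^2 - D_c^2|$, giving $|T_1| \leq |D_f - D_c| \leq 2 \sup_{[0,T]} |\hS^f - \hS^c|$. For $T_2$, the bound $\sqrt{D_c^2+E_f}+\sqrt{D_c^2+E_c} \geq \sqrt{E_f}+\sqrt{E_c} = (|b^f_n|+|b^c_n|)\sqrt{2h_\ell|\log U_n|}$ together with $|(b^f_n)^2-(b^c_n)^2| \leq |b^f_n - b^c_n|(|b^f_n|+|b^c_n|)$ gives $|T_2| \leq \sqrt{2 h_\ell |\log U_n|}\; |b^f_n - b^c_n|$, with no non-degeneracy assumption on $b$.

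Combining these, the $T_1$ bound immediately yields $\EE[\max_n T_1^2] = \bO(h_\ell^2 (\log h_\ell)^2)$. For $T_2$, Assumptions A1 and A3 together with the moment bounds from Theorem \ref{thm:KP} give $\EE[\max_n |b^f_n - b^c_n|^4] = \bO(h_\ell^2)$ (the dominant contribution being the $\bO(\sqrt{h_\ell})$ time-gap term from A3 between the coarse-step start and the odd-indexed mid-coarse-interval time). Applying Cauchy--Schwarz to
\[
\EE[\max_n T_2^2] \leq 2 h_\ell\, \EE\bigl[(\max_n |\log U_n|)\,(\max_n |b^f_n-b^c_n|^2)\bigr],
\]
together with $\EE[(\max_n |\log U_n|)^2] = \bO((\log h_\ell)^2)$ from Lemma \ref{thm:extreme1}, then yields $\EE[\max_n T_2^2] = \bO(h_\ell^2 |\log h_\ell|)$, which is of lower order. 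Adding all contributions gives $V_\ell = \bO(h_\ell^2 (\log h_\ell)^2)$ as claimed. The critical design choice in the proof is the splitting $T_1 + T_2$: each piece admits a tailored denominator lower bound that prevents blowup in the relevant limit, so the extreme-path machinery of Lemmas \ref{thm:ep1}--\ref{thm:ep2} is not needed here, and the estimate holds without assuming $b$ is bounded away from zero.
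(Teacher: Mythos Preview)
Your argument is correct and essentially identical to the paper's: both apply the conjugate identity to the square-root difference with two tailored denominator lower bounds (your intermediate $T_1+T_2$ splitting is just a reorganisation of the paper's direct splitting of the numerator $(\hD^f_n)^2-(\hD^c_n)^2$), arriving at the same pair of estimates $|D_f-D_c|$ and $\sqrt{h_\ell|\log U_n|}\,|b^f_n-b^c_n|$. One small correction that does not affect the conclusion: at odd $n$ the dominant contribution to $b^f_n-b^c_n$ is the one-step Milstein increment $\hS^f_n-\hS^f_{n-1}$ rather than the A3 time-gap, so $\EE[\max_n|b^f_n-b^c_n|^4]=\bO(h_\ell^2(\log h_\ell)^2)$ (not $\bO(h_\ell^2)$), which merely raises your $T_2$ bound to $\bO(h_\ell^2(\log h_\ell)^2)$ --- and since the $U_n$ are independent of the path you can factor the expectation directly, as the paper does, instead of invoking Cauchy--Schwarz.
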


\begin{proof}
If $\hS^f_{min}$ and $\hS^c_{min}$ are the computed minima 
for the fine and coarse paths, then
\[
\left| \hS^f_{min} - \hS^c_{min} \right| 
\leq \max_n \left| \hS^f_{n,min} - \hS^c_{n,min} \right|
\leq \max_n \left| \hS^f_{n} - \hS^c_{n} \right| +
     \max_n \left| \hD^f_{n} - \hD^c_{n} \right|,
\]
where 
\[
\hD^f_n = \fracs{1}{2} \sqrt{ \left(\hS^f_{n+1} \!-\! \hS^f_n\right)^2 - 2\,(b^f_n)^2 \, h_\ell \log U_n }
\]
and $\hD^c_n$ is defined similarly.   
If $\hD^f_n$ and $\hD^c_n$ are both zero, then $| \hD^f_{n} - \hD^c_{n} | =0$.
Otherwise, their sum is strictly positive and, using the inequality 
$\left| \rule{0in}{0.14in} |x| - |y| \right| \leq | x-y|$, 
straightforward manipulations give
\begin{eqnarray*}
\lefteqn{
\left| \hD^f_{n} - \hD^c_{n} \right| 
= \frac{\left| (\hD^f_{n})^2 - (\hD^c_{n})^2 \right|}{\hD^f_{n} +\hD^c_{n}}
} \\
&\leq & \frac{\left| (\hS^f_{n+1} - \hS^f_{n})^2 - (\hS^c_{n+1} - \hS^c_{n})^2 \right|}{4(\hD^f_{n} +\hD^c_{n})}
\ +\  \frac{|(b^f_{n})^2 -(b^c_{n})^2|\ h_\ell \, |\log U_n|}{2(\hD^f_{n} +\hD^c_{n})} \\
&\leq & \fracs{1}{2} \, \left| |\hS^f_{n+1} - \hS^f_{n}| -|\hS^c_{n+1} - \hS^c_{n}| \right|
\ +\  \fracs{1}{\sqrt{2}} \, \left| |b^f_{n}| -|b^c_{n}|\right|\ \sqrt{ h_\ell \, |\log U_n|} \\
&\leq & \fracs{1}{2} \, \left( \left| \hS^f_{n+1}  - \hS^c_{n+1} \right|
                             + \left| \hS^f_{n}    - \hS^c_{n}   \right| \right) 
\ +\  \fracs{1}{\sqrt{2}} \, |b^f_{n} -b^c_{n}|\ \sqrt{ h_\ell \, |\log U_n|},
\end{eqnarray*}
and hence
\[
\left( \hS^f_{min} \!-\! \hS^c_{min} \right)^2 
\leq 8 \max_n \left( \hS^f_{n} \!-\! \hS^c_{n} \right)^2 +
   h_\ell  \max_n  (b^f_{n} \!-\!b^c_{n})^2\ 
          \max_n |\log U_n|.
\]

When $n$ is even, assumption A1 gives
$
(b^f_{n} -b^c_{n})^2 \leq K_1^2 \left( \hS^f_n - \hS^c_n \right)^2,
$
while for odd $n$ we have
\begin{eqnarray*}
(b^f_{n} -b^c_{n})^2 
&=& \left( (b^f_{n} -b^f_{n-1})\ +\ (b^f_{n-1} -b^c_{n-1}) \right)^2\\
&\leq&  2 K_1^2 \left(  \hS^f_n - \hS^f_{n-1} \right)^2
      + 2 K_1^2 \left(  \hS^f_{n-1} - \hS^c_{n-1} \right)^2.
\end{eqnarray*}
Now,
\[
\hS^f_n - \hS^f_{n-1} = a_{n-1} h_\ell + b_{n-1} \Delta W_{n-1}
 + \fracs{1}{2} b'_{n-1}b_{n-1} ((\Delta W_{n-1})^2 - h_\ell).
\]
Asymptotically, the dominant term on the right is $b_{n-1} \Delta W_{n-1}$,
and it can be proved using the Jensen and H{\"o}lder inequalities, 
the boundedness of $\EE[\max_n b_n^4]$ and Lemma \ref{thm:extreme2} that
\[
\EE\left[ \max_n (\hS^f_n - \hS^f_{n-1})^2 \right] = \bO(h_\ell\, |\log h_\ell|),
\]
from which it follows that 
\[
\EE\left[ \max_n  (b^f_{n} -b^c_{n})^2   \right] = \bO(h_\ell\, |\log h_\ell|).
\]
From Lemma \ref{thm:extreme1},
\[
\EE\left[ \max_n |\log U_n| \right] = \bO(|\log h_\ell|),
\]
and hence,
\[
\EE\left[ \left( \hS^f_{min} - \hS^c_{min} \right)^2 \right] = \bO(h^2_\ell\, (\log h_\ell)^2),
\]
and the final result then follows from the uniform Lipschitz property of the payoff
function and the bound
\[
\max_n \EE\left[ \left( \hS^f_n - \hS^c_n \right)^2 \right] = \bO(h^2_\ell).
\]

\end{proof}

\subsubsection{Extreme paths}

The analysis of the variance of the multilevel estimators for barrier and digital 
options will follow the extreme path approach used in \cite{ghm09}.  We prepare for 
this with the following lemma in which we use the notation
$
u \prec h^\alpha
$
when $u>0$ and there exists a constant $c>0$ such that
$
u < c \, h^\alpha,
$
for sufficiently small $h$.  Note that
\[
u_1 \prec h^{\alpha_1}, ~~~
u_2 \prec h^{\alpha_2} ~~~
\Longrightarrow ~~~
u_1 + u_2 \prec h^{\min(\alpha_1,\alpha_2)}, ~~~
u_1\, u_2 \prec h^{\alpha_1+\alpha_2}.
\]

\begin{lemma}
\label{thm:extreme_conditions}
For any $\gamma\!>\!0$, the probability that a Brownian path $W(t)$, its increments 
$\Delta W_n \equiv W((n\!+\!1)h)-W(nh)$, and the corresponding SDE solution $S(t)$ and
its fine $(h)$ and coarse $(2h)$ path approximations $\hS^f_n$ and $\hS^c_n$
satisfy any of the following extreme conditions
\begin{eqnarray*}
 \max_n\left( \max ( |S(nh)|,\ |\hS_n^f|,\ |\hS_n^c| ) \right) &>& h^{-\gamma}
\\  
\max_n \left( \max( |S(nh)\!-\!\hS_n^c|, \, 
                     |S(nh)\!-\!\hS_n^f|, \, 
                     |\hS_n^f\!-\!\hS_n^c| )\right) &>&  h^{1-\gamma}
\\ 
\max_n |\Delta W_n| &>& h^{1/2-\gamma}\\
\sup_{[0,T]} \left| \hS^f(t) - S(t)\right| &>& h^{1-\gamma}, \\
\sup_{[0,T]} \left| W(t) - \overline{W}(t)\right| &>& h^{1/2-\gamma},
\end{eqnarray*}
is $\bo(h^p)$ for all $p\!>\!0$.  Here $\overline{W}(t)$ is defined to be the 
piecewise linear interpolant of the discrete values $W_n$.

Furthermore, if none of these extreme conditions is satisfied, and
$\gamma<\fracs{1}{2}$, then
\begin{eqnarray}
\max_n |\hS^f_n - \hS^f_{n-1}| & \prec & h^{1/2-2\gamma}
\label{eq:ec1} \\
\max_n |b^f_n - b^f_{n-1}| & \prec & h^{1/2-2\gamma}
\label{eq:ec2} \\
\max_n \max(|b^f_n|,|b^c_n|) & \prec & h^{-\gamma}
\label{eq:ec3} \\
\max_n |b^f_n - b^c_n| & \prec & h^{1/2-2\gamma}
\label{eq:ec4}
\end{eqnarray}
where $b^c_n$ is defined to equal $b^c_{n-1}$ if $n$ is odd.

\end{lemma}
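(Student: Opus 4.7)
The argument splits into two pieces: the five tail-probability estimates, all of which reduce to Markov's inequality applied to appropriate moment bounds from Section 2, and the four deterministic inequalities (\ref{eq:ec1})--(\ref{eq:ec4}), which are a pathwise analysis on the complement of the union of the five extreme events, using the Milstein recursion (\ref{eq:Milstein1}) together with assumptions A1--A3.

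For the first piece I would dispatch each extreme condition in turn. The uniform-growth condition $\max_n(\max(|S(nh)|,|\hS_n^f|,|\hS_n^c|))>h^{-\gamma}$ follows from Theorems \ref{thm:KP0} and \ref{thm:KP}, which supply uniform-in-$h$ moment bounds on all three processes (the discrete approximations via the Kloeden--Platen interpolant), whereupon Lemma \ref{thm:ep1} applies directly. For the three difference quantities controlled by $h^{1-\gamma}$, Theorem \ref{thm:KP} gives $\EE[\sup_t|S(t)-\hS_{KP}(t)|^m]=\bO(h^m)$ for both the fine and coarse approximations, and the third quantity follows by a triangle inequality; division by $h^{1-\gamma}$ and Markov with $m$ large yields $\bo(h^p)$. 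The increment bound uses $\Delta W_n/\sqrt{h}\sim N(0,1)$, so $\EE[\max_n|\Delta W_n|^m]=\bO(h^{m/2}(\log(1/h))^{m/2})$ by Lemma \ref{thm:extreme2}, and Markov then gives $\bO(h^{m\gamma}|\log h|^{m/2})$, which is $\bo(h^p)$ for $m>p/\gamma$. The bound on $\sup_t|\hS^f(t)-S(t)|$ combines Theorem \ref{thm:KPBB}(i) with Theorem \ref{thm:KP} to give $\bO((h|\log h|)^m)$ moments, and the Brownian-bridge bound follows from Corollary \ref{thm:extreme3} applied on each subinterval $[t_n,t_{n+1}]$.

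For the second piece, work on the complement of the union of the five extreme events. For (\ref{eq:ec1}), expand (\ref{eq:Milstein1}):
\[
\hS_n^f-\hS_{n-1}^f=a_{n-1}h+b_{n-1}\Delta W_{n-1}+\fracs{1}{2}(b'b)_{n-1}\bigl((\Delta W_{n-1})^2-h\bigr),
\]
where A2 gives $|a_{n-1}|,|b_{n-1}|,|(b'b)_{n-1}|\prec h^{-\gamma}$ (since $|\hS_{n-1}^f|\prec h^{-\gamma}$) and the third extreme condition gives $|\Delta W_{n-1}|\prec h^{1/2-\gamma}$; the three terms contribute $h^{1-\gamma}$, $h^{1/2-2\gamma}$, $h^{1-3\gamma}$ respectively, and for $\gamma<\fracs{1}{2}$ the smallest exponent is $1/2-2\gamma$. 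Inequality (\ref{eq:ec2}) combines A1 and A3 via $|b^f_n-b^f_{n-1}|\leq K_1|\hS_n^f-\hS_{n-1}^f|+K_3(1+|\hS_{n-1}^f|)\sqrt{h}$, each piece being $\prec h^{1/2-2\gamma}$. Inequality (\ref{eq:ec3}) is immediate from A2. For (\ref{eq:ec4}), A1 gives $|b_n^f-b_n^c|\leq K_1|\hS_n^f-\hS_n^c|\prec h^{1-\gamma}$ when $n$ is even, while for odd $n$ we use $b_n^c=b_{n-1}^c$ and the triangle inequality $|b_n^f-b_{n-1}^c|\leq|b_n^f-b_{n-1}^f|+|b_{n-1}^f-b_{n-1}^c|$, combining (\ref{eq:ec2}) with the even-$n$ case.

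The main delicacy is juggling the logarithmic correction factors that appear in the moment bounds for $\max_n|\Delta W_n|$ and for the Brownian bridge $W-\overline{W}$: these prevent Lemma \ref{thm:ep1} from being quoted verbatim, but they are absorbed by the polynomial slack $h^{m\gamma}$ once $m$ is taken large enough, which is precisely why the statement insists on $\gamma>0$ and asks for $\bo(h^p)$ rather than $\bO(h^p)$. A secondary subtlety is the odd-$n$ case of (\ref{eq:ec4}), where $b_n^c$ is evaluated at $t_{n-1}$ rather than $t_n$ and so A1 does not apply directly; routing through (\ref{eq:ec2}) is the natural detour.
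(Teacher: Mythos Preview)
Your proposal is correct and follows essentially the same approach as the paper: Markov's inequality against the moment bounds of Theorems~\ref{thm:KP0}, \ref{thm:KP} and \ref{thm:KPBB} for the probability estimates, then a pathwise reading of the Milstein recursion with A1--A3 for (\ref{eq:ec1})--(\ref{eq:ec4}), including the detour through (\ref{eq:ec2}) for odd $n$ in (\ref{eq:ec4}). The only cosmetic difference is that for the third and fifth extreme conditions you invoke the extreme-value moment bounds (Lemma~\ref{thm:extreme2} and Corollary~\ref{thm:extreme3}) and then apply Markov once to the maximum, whereas the paper uses a union bound over the $\bO(h^{-1})$ intervals together with Lemma~\ref{thm:ep1} (respectively the explicit tail from Corollary~\ref{thm:barrier2}); both routes absorb the extra $h^{-1}$ or logarithmic factor into the slack $h^{m\gamma}$ in exactly the way you describe.
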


\begin{proof}
The probability of the first two extreme conditions is $\bo(h^p)$ for all 
$p\!>\!0$ due to Theorems \ref{thm:KP0} and \ref{thm:KP}  and Lemma \ref{thm:ep1}.
Since
\[
\PP\left(  \max_n |\Delta W_n| > h^{1/2-\gamma} \right)
\leq \sum_n \PP\left(|\Delta W_n| > h^{1/2-\gamma}\right),
\]
the probability of the third is $\bo(h^p)$ for all $p\!>\!0$ due to Lemma \ref{thm:ep1}.

The fourth extreme condition has a $\bo(h^p)$  probability because 
Theorems \ref{thm:KP} and \ref{thm:KPBB} together imply a uniform bound as 
$h\rightarrow 0$ for 
\[
\EE\left[ h^{-m+m\gamma/2} \sup_{[0,T]} \left| \hS^f(t) - S(t)\right|^m \right],
\]
for any $m>0$.  Similarly, the fifth is an extreme condition with $\bo(h^p)$  
probability because of Corollary \ref{thm:barrier2}.

If none of the extreme conditions is satisfied, then using Assumption A2 gives
\[
|\hS^f_{n+1} - \hS^f_n | < K_2 h ( 1 \!+\! h^{-\gamma})
                         + K_2 ( 1 \!+\! h^{-\gamma}) h^{1/2-\gamma} 
                         +  \fracs{1}{2} K_2( 1 \!+\! h^{-\gamma}) (h^{1-2\gamma} \!+\! h)
\]
and therefore (\ref{eq:ec1}) is satisfied provided $\gamma<\fracs{1}{2}$ so that 
$h^{1/2-2\gamma}$ is the dominant term in the above inequality.
  
(\ref{eq:ec2}) follows as a consequence because of Assumptions A1 and A3, 
and (\ref{eq:ec3}) is obtained similarly from Assumption A2 and the bound on 
$|\hS_n^f|$ and $|\hS_n^c|$.

When $n$ is even, the bound in (\ref{eq:ec4}) follows from Assumption A1 and the bound
on $|\hS_n^f\!-\!\hS_n^c|$, while for odd $n$ it requires the observation that
\[
|b^f_n - b^c_n|\  =  \ | b^f_n \!-\! b^c_{n-1} |
              \ \leq\ | b^f_n \!-\! b^f_{n-1} | + | b^f_{n-1} \!-\! b^c_{n-1} |
\]
and the bound then follows from (\ref{eq:ec2}) and the corresponding bound for $n\!-\!1$.

\end{proof}

\subsubsection{Barrier options}

The barrier option which is considered is a down-and-out option for 
which the payoff is a Lipschitz function of the value of the underlying
at maturity, provided the underlying has never dropped below a value $B$,
\[
P = f(S(T))\ {\bf 1}_{\tau\!>\!T},
\]
with the crossing time $\tau$ defined as 
$\displaystyle
\tau = \inf_{t>0}\left\{ S(t)<B \right\}.
$

One approach would be to follow the lookback approximation in computing the 
minimum of both the fine and coarse paths. However, the variance would be larger
in this case because the payoff is a discontinuous function of the minimum.
A better treatment, which is the one used in \cite{giles08b}, instead
computes for each timestep the probability that the minimum of the interpolant 
crosses the barrier, using the result from Lemma \ref{thm:barrier1}.  This gives 
the conditional expectation for the payoff, conditional on the discrete Brownian 
increments of the fine path.  For the fine path this gives
\[
\hP^f_\ell = f(\hS^f_N) \ \prod_{n=0}^{N-1} (1 - \hp^f_n),
\]
where
\[
\hp^f_n = \exp\left(\frac{ - 2\, (\hS^f_n \!-\! B)^+(\hS^f_{n+1} \!-\! B)^+ }{(b^f_n)^2\, h_\ell} \right).
\]
The payoff for the coarse path is similarly defined as
\[
\hP^c_\ell = f(\hS^c_N) \ \prod_{n=0}^{N-1} (1 - \hp^c_n),
\]
where
\begin{equation}
\hp^c_n = \exp\left(\frac{ - 2\, (\hS^c_n \!-\! B)^+(\hS^c_{n+1} \!-\! B)^+ }{(b^c_n)^2\, h_\ell} \right),
\label{eq:hpc}
\end{equation}
and for odd values of $n$, $\hS^c_n$ is defined by the usual interpolant and 
$b^c_n\equiv b^c_{n-1}$.

Equality (\ref{eq:equality}) is satisfied in this case because
\[
\PP\left( \inf_{[t_n,t_{n+2}]} \hS^c(t) \!>\! B \ | \ \hS^c_n, \hS^c_{n+2}\right)
= \EE\left[
\PP\left( \inf_{[t_n,t_{n+2}]} \hS^c(t) \!>\! B \ | \ \hS^c_n, \hS^c_{n+1}, \hS^c_{n+2}\right)
\right]
\]
where the expectation on the r.h.s.~is taken with respect to the distribution 
of the interpolated value $\hS^c_{n+1}$, conditional on $\hS^c_n, \hS^c_{n+2}$.  


\begin{theorem}
Provided
$\displaystyle
b_{min} \equiv \inf_{[0,T]} |b(B,t)| > 0,
$
and 
$\displaystyle
\inf_{[0,T]} S(t)
$ 
has a bounded density in the neighbourhood of $B$, then
the multilevel estimator for a down-and-out barrier option has variance 
$V_\ell = \bo(h_\ell^{3/2-\delta})$ for any $\delta\!>\!0$.
\label{thm:barrier}
\end{theorem}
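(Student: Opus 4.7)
The plan is to follow the extreme-paths strategy of~\cite{ghm09}, adapted to the Milstein setting. We bound $V_\ell\leq\EE[(\hP_\ell^f-\hP_{\ell-1}^c)^2]$ and write $Q^f\equiv\prod_n(1-\hp^f_n)\in[0,1]$, $Q^c\equiv\prod_n(1-\hp^c_n)\in[0,1]$, giving the decomposition
\[
\hP_\ell^f-\hP_{\ell-1}^c = (f(\hS^f_N)-f(\hS^c_N))\,Q^f + f(\hS^c_N)(Q^f-Q^c).
\]
The contribution from paths satisfying any of the extreme conditions in Lemma~\ref{thm:extreme_conditions} is $\bo(h_\ell^p)$ for every $p>0$ by Lemma~\ref{thm:ep2}, noting that the second moment of $\hP_\ell^f-\hP_{\ell-1}^c$ is uniformly bounded thanks to the linear growth of $f$ together with Theorems~\ref{thm:KP0} and~\ref{thm:KP}. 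The first piece of the decomposition contributes $\bO(h_\ell^2)$: by the Lipschitz property of $f$, the bound $|Q^f|\leq 1$, the decomposition~(\ref{eq:decomposition}), and Theorems~\ref{thm:KP} and~\ref{thm:KPBB}, we obtain $\EE[(\hS^f_N-\hS^c_N)^2]=\bO(h_\ell^2)$.

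The main work is the second piece on non-extreme paths. Fix $0<\gamma<\gamma'<\fracs{1}{2}$ to be chosen small, and introduce the near-barrier event $F=\{\min_n\min(\hS^f_n,\hS^c_n)\leq B+h_\ell^{1/2-\gamma'}\}$. On the non-extreme complement of $F$, the lower bound $(\hS^f_n-B)^+(\hS^f_{n+1}-B)^+\succ h_\ell^{1-2\gamma'}$ combined with $(b_n)^2\prec h_\ell^{-2\gamma}$ from~(\ref{eq:ec3}) gives exponents $\succeq h_\ell^{-2(\gamma'-\gamma)}$ in both $\hp^f_n$ and $\hp^c_n$, so $|Q^f-Q^c|$ is super-polynomially small and the contribution is negligible. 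On non-extreme paths in $F$, the bounded-density assumption on $\inf_{[0,T]}S(t)$ near $B$, together with $\sup_t|\hS^f(t)-S(t)|<h_\ell^{1-\gamma}$, yields probability $\bO(h_\ell^{1/2-\gamma'})$.

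The key pointwise estimate comes from $|Q^f-Q^c|\leq\sum_n|\hp^f_n-\hp^c_n|\leq\sum_n|x^f_n-x^c_n|$, using $|e^{-x}-e^{-y}|\leq|x-y|$ for $x,y\geq 0$ and $x^f_n\equiv 2(\hS^f_n-B)^+(\hS^f_{n+1}-B)^+/((b^f_n)^2h_\ell)$. The hypothesis $b_{\min}>0$ and Lipschitz continuity of $b$ keep $(b^c_n)^2h_\ell=\Theta(h_\ell)$ on near-barrier non-extreme paths, so expanding $x^f_n-x^c_n$ via (\ref{eq:ec1})--(\ref{eq:ec4}) gives $|x^f_n-x^c_n|\prec h_\ell^{1/2-C(\gamma+\gamma')}$ per near-barrier timestep, while~(\ref{eq:ec1}) bounds the number of such timesteps by a polynomial with exponent controlled by $\gamma,\gamma'$. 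Combining the pointwise bound on $(Q^f-Q^c)^2$ with $\PP(F)=\bO(h_\ell^{1/2-\gamma'})$ and polynomial growth of $f$ gives a contribution of $\bO(h_\ell^{3/2-C(\gamma+\gamma')})$; letting $\gamma,\gamma'\to 0$ yields $V_\ell=\bo(h_\ell^{3/2-\delta})$ for every $\delta>0$. The main obstacle is the pointwise bound on $|Q^f-Q^c|$: one must carefully balance the per-timestep estimate against the cardinality of the near-barrier window, exploit $b_{\min}>0$ to prevent $(b^c_n)^2h_\ell$ from collapsing, and use the bounded-density hypothesis to improve the trivial $\PP(F)=\bO(1)$ bound to $\bO(h_\ell^{1/2-\gamma'})$, without which the target exponent $3/2$ would be unreachable.
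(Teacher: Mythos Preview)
Your overall architecture matches the paper's: split into extreme paths, paths far from the barrier, and paths near the barrier, and use the bounded density of $\inf_t S(t)$ to get probability $\bO(h_\ell^{1/2-\gamma'})$ for the near-barrier set. The decomposition $\hP_\ell^f-\hP_{\ell-1}^c=(f(\hS^f_N)-f(\hS^c_N))Q^f+f(\hS^c_N)(Q^f-Q^c)$ and the treatment of the first piece are fine. The gap is in your pointwise control of $|Q^f-Q^c|$ on non-extreme near-barrier paths.

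You bound $|Q^f-Q^c|\leq\sum_n|\hp^f_n-\hp^c_n|\leq\sum_n|x^f_n-x^c_n|$ and then assert that (\ref{eq:ec1}) bounds the number of near-barrier timesteps by a power of $h_\ell$. That assertion is false: (\ref{eq:ec1}) is an \emph{upper} bound on the one-step increment $|\hS^f_n-\hS^f_{n-1}|$; it provides no lower bound and so says nothing about how quickly the path leaves the strip $[B,B+h_\ell^{1/2-\gamma'}]$. A non-extreme path can drift inside that strip for an $\bO(1)$ time interval, giving $\bO(h_\ell^{-1})$ near-barrier indices, and then your sum is of order $h_\ell^{-1}\cdot h_\ell^{1/2-C(\gamma+\gamma')}$, which diverges. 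No choice of $\gamma,\gamma'$ rescues this, and the paper's hypotheses (bounded density of the \emph{minimum} near $B$, not of the occupation measure) do not supply the occupation-time bound you would need.

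The paper avoids the telescoping sum altogether. From $|X^f_n-X^c_n|<h_\ell^{1/2-5\gamma}$ on the restricted index set $R$ one gets $1-\hp^c_n\leq(1-\hp^f_n)+\hp^f_n\Delta_\ell$ with $\Delta_\ell=1-\exp(-h_\ell^{1/2-5\gamma})$. The function $g(\Delta)=\prod_{n\in R}\bigl((1-\hp^f_n)+\hp^f_n\Delta\bigr)-\prod_{n\in R}(1-\hp^f_n)-\Delta$ is convex in $\Delta$ with $g(0)=0$ and $g(1)=-\prod_{n\in R}(1-\hp^f_n)\leq 0$, so $g(\Delta_\ell)\leq 0$, which yields $\bigl|\prod_{n\in R}(1-\hp^f_n)-\prod_{n\in R}(1-\hp^c_n)\bigr|\leq\Delta_\ell$ \emph{independently of the cardinality of $R$}. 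This is precisely the step that lets the per-timestep bound $h_\ell^{1/2-5\gamma}$ pass to the full product difference without acquiring a factor of $N$.
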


\begin{proof}
The proof involves dividing the paths into the following three subsets:\\[0.05in]
(i) extreme paths;\\[0.05in]
(ii) paths which are not extreme and for which
$
|S_{min} \!-\! B\, | > h_\ell^{1/2 - 4 \gamma}
$
for $0\!<\!\gamma\!<\!\fracs{1}{8}$;
\\[0.05in]
(iii) the rest.\\[0.05in]

Following the extreme path approach used in \cite{ghm09}, we start with
\begin{eqnarray*}
\lefteqn{
\VV[  \hP_\ell^{f} \!-\! \hP_{\ell-1}^{c}    ]
}\\
& \leq & \EE[ (\hP_\ell^{f} \!-\! \hP_{\ell-1}^{c})^2 ] \\
& = & \EE[ (\hP_\ell^{f} \!-\! \hP_{\ell-1}^{c})^2 {\bf 1}_{(i)} ]
\ +\  \EE[ (\hP_\ell^{f} \!-\! \hP_{\ell-1}^{c})^2 {\bf 1}_{(ii)} ]
\ +\  \EE[ (\hP_\ell^{f} \!-\! \hP_{\ell-1}^{c})^2 {\bf 1}_{(iii)} ]
\end{eqnarray*}
where the indicator functions have unit value for paths within the respective
subsets.  Each of these is considered in turn, and their contributions to
$\EE[(\hP^f_\ell \!-\! \hP^c_{\ell-1})^2]$ are bounded.

\vspace{0.1in}
\noindent
(i) Paths are defined to be extreme if they satisfy any of the conditions of
Lemma \ref{thm:extreme_conditions} for $0\!<\!\gamma\!<\!\fracs{1}{8}$.
The Lipschitz bound for the payoff together with the bounds in Theorem
\ref{thm:KP} imply a uniform bound for $\EE[\, (\hP_\ell^{f})^4]$ and 
$\EE[\, (\hP_{\ell-1}^{c})^4\,]$ and therefore also for 
$\EE[\, (\hP_\ell^{f} \!-\! \hP_{\ell-1}^{c})^4\,]$.
Hence, by Theorem \ref{thm:ep2}, 
$\EE[ (\hP_\ell^{f} \!-\! \hP_{\ell-1}^{c})^2 {\bf 1}_{(i)}]$ 
is $\bo(h_\ell^p)$ for all $p\!>\!0$.

\vspace{0.1in}
\noindent
(ii) Suppose that $S(t)$ attains its minimum at time $\tau \in [t_n, t_{n+1}]$.

First we consider the case $S_{min} < B - h_\ell^{1/2-4\gamma}$. Starting with
\[
| \hS^f_n - S_{min} | \leq | \hS^f_n - \hS^f(\tau) | + |\hS^f(\tau) - S(\tau) |,
\]
and noting that
\[
\hS^f(\tau) - \hS^f_n = \frac{\tau - t_n}{h} \left( \hS^f_{n+1} - \hS^f_n \right)
+ b_n^f (W(\tau) - \overline{W}(\tau)),
\]
we can conclude that $| \hS^f_n - S_{min} | \prec h_\ell^{1/2 - 2\gamma}$.
Hence, for sufficiently small $h_\ell$,
 $| \hS^f_n - S_{min} | < h_\ell^{1/2-4\gamma}$ and so $\hS^f_n$ is guaranteed to be 
less than $B$.
In addition, $\hS^f_n \!-\! \hS^c_n < h_\ell^{1-\gamma}$ and so, for sufficiently 
small $h_\ell$, $\hS^c_n$ is also guaranteed to be less than $B$ and hence
$\hP_\ell^f \!-\! \hP_{\ell-1}^c = 0$.

In the alternate case $S_{min} > B + h_\ell^{1/2-4\gamma}$, then
\[
\min_n \min(\hS^f_n, \hS^c_n) > B + h_\ell^{1/2-4\gamma} - h_\ell^{1-\gamma}
\]
and since $h_\ell^{1-\gamma} \prec h_\ell^{1/2-4\gamma}$ it follows that
$\prod_n (1\!-\!\hp^f_n)$ and $\prod_n (1\!-\!\hp^c_n)$ 
are both equal to $1 - \bo(h_\ell^p)$ for all $p\!>\!0$, and so
$
|\hP_\ell^f \!-\! \hP_{\ell-1}^c| \prec h_\ell^{1-\gamma}
$
due to the Lipschitz condition and the bound on $\hS_N^f \!-\! \hS_N^c$.
Hence, the contribution to $\EE[(\hP^f_\ell \!-\! \hP^c_{\ell-1})^2]$ is  
at most $\bO(h_\ell^{2-2\gamma})$.

\vspace{0.1in}
\noindent
(iii) Our first step is to note that if any one of 
$\hS^f_n,\hS^f_{n+1}, \hS^c_n,\hS^c_{n+1}$ is greater than $B+h_\ell^{1/2-3\gamma}$,
then the others will be greater than $B+\fracs{1}{2} h_\ell^{1/2-3\gamma}$,
when $h_\ell$ is sufficiently small, since $|\hS^f_n-\hS^f_{n+1}|\prec h_\ell^{1/2-2\gamma}$
and $\max(|\hS^f_n-\hS^c_n|,|\hS^f_{n+1}-\hS^c_{n+1}| ) \leq h_\ell^{1-\gamma}$.
In this case, $\hp^f_n$ and $\hp^c_n$ will both be $\bo(h_\ell^p)$, and so 
\[
\prod_n (1\!-\!\hp^f_n) = \prod_{n\in R} (1\!-\!\hp^f_n) + \bo(h_\ell^p),
\]
and
\[
\prod_n (1\!-\!\hp^c_n) = \prod_{n\in R} (1\!-\!\hp^c_n) + \bo(h_\ell^p),
\]
where $R$ is the set of indices $n$ for which none of 
$\hS^f_n,\hS^f_{n+1}, \hS^c_n,\hS^c_{n+1}$ 
is greater than $B+h_\ell^{1/2-3\gamma}$.

Assume $n\in R$.  We have
$
\hS^f_n - \hS^c_n < h_\ell^{1-\gamma}
$
and
$
\hS^f_{n+1} - \hS^c_{n+1} < h_\ell^{1-\gamma}
$
due to the definition of extreme paths, and
$
b^f_n - b^c_n \prec h_\ell^{1/2-2\gamma},
$
due to Lemma \ref{thm:extreme_conditions}.
If we now define
\begin{eqnarray*}
X^f_n &\equiv& \frac{2\, (\hS^f_n \!-\! B)^+(\hS^f_{n+1} \!-\! B)^+ }{(b^f_n)^2\, h_\ell},\\
X^c_n &\equiv& \frac{2\, (\hS^c_n \!-\! B)^+(\hS^c_{n+1} \!-\! B)^+ }{(b^c_n)^2\, h_\ell},
\end{eqnarray*}
then when $X^f_n$ and $X^c_n$ are both strictly positive it follows, 
through the continuity of $b(S,t)$ and for sufficiently small $h_\ell$, that 
$
\min (|b^c_n|, |b^f_n|) > \fracs{1}{2} \, b_{min},
$
and hence through repeated use of the following identity,
\begin{equation}
f_1\, g_1 - f_2\, g_2 = \fracs{1}{2} (f_1\!-\!f_2)(g_1\!+\!g_2)
                      + \fracs{1}{2} (f_1\!+\!f_2)(g_1\!-\!g_2),
\label{eq:diff}
\end{equation}
and the fact that $n\in R$ to bound terms such as $\hS^f_n\!-\!B$, we obtain 
$
\left|  X^f_n  - X^c_n  \right| \prec h_\ell^{1/2 - 4 \gamma},
$
and hence
$
\left|  X^f_n  - X^c_n  \right| < h_\ell^{1/2 - 5 \gamma},
$
for sufficiently small $h_\ell$.
The same bound can also be achieved in the other cases in which at least one 
of $X^f_n$ and $X^c_n$ is equal to zero.
If we define
\[
\Delta_\ell\ \equiv\ 1 - \exp(- h_\ell^{1/2 - 5 \gamma}),
\]
then we obtain
\begin{eqnarray*}
1 - \hp^c_n &=& (1 - \hp^f_n) + (\hp^f_n - \hp^c_n) \\
            &=& (1 - \hp^f_n) + \hp^f_n \left( 1 - \exp( X^f_n  \!-\! X^c_n) \right)\\ 
            &\leq & (1 - \hp^f_n) + \hp^f_n \Delta_\ell.
\end{eqnarray*}

Since 
$\displaystyle
g(\Delta)\ \equiv\ \prod_{n\in R} \left(\rule{0in}{0.15in} (1\!-\!p^f_n) + p^f_n \,\Delta \right) - \prod_{n\in R} (1\!-\!p^f_n)  -  \Delta
$
is convex, $g(0)\!=\!0$ and $g(1)\!=\!- \prod_{n\in R} (1\!-\!p^f_n) \leq 0$, we conclude
that 
$g(\Delta)\!\leq\!0$, $\forall \Delta \in [0,1]$.
Hence, 
\[
\prod_{n\in R} (1 \!-\! \hp^c_n)\ \leq\ \prod_{n\in R} (1 \!-\! \hp^f_n)\ +\ \Delta_\ell.
\]
Similarly, 
$
1 - \hp^f_n \leq  (1 - \hp^c_n) + \hp^c_n \Delta_\ell,
$
which leads to
\[
\prod_{n\in R} (1 \!-\! \hp^f_n)\ \leq\ \prod_{n\in R} (1 \!-\! \hp^c_n)\ +\ \Delta_\ell,
\]
and therefore
\[
\left| \prod_{n\in R} (1 \!-\! \hp^f_n) - \prod_{n\in R} (1 \!-\! \hp^c_n) \right| 
\leq \Delta_\ell.
\]

Returning to the original products over all $n$, 
\[
\left| \prod_n (1 \!-\! \hp^f_n) - \prod_n (1 \!-\! \hp^c_n) \right| 
\prec h_\ell^{1/2 - 5 \gamma}.
\]
This gives us $\hP^f_\ell \!-\!\hP^c_{\ell-1} \prec h_\ell^{1/2 - 6 \gamma}$,
because of the bound on $f(\hS^f_N)$ and $f(\hS^c_N)$, and so the contribution
from set (iii) to $\EE[(\hP^f_\ell \!-\! \hP^c_{\ell-1})^2]$ is at most
$\bO(h_\ell^{3/2-16\gamma})$.

\vspace{0.1in}

The proof is finally completed by choosing $\gamma < \min(\fracs{1}{8},\delta/16)$.

\end{proof}

\subsubsection{Digital options}

A digital option has a payoff which is a discontinuous function of the 
value of the underlying asset at maturity, the simplest example being
\[
P = {\bf 1}_{S(T)>K},
\]
which has a unit payoff iff $S(T)$ is greater than the strike $K$.

The difficulty with the digital option is that the approach used in section 
\ref{sec:Lipschitz} will lead to an $\bO(h_\ell)$ fraction of the paths having 
coarse and fine path approximations to $S(T)$ on either side of the strike, 
producing $\hP^f_\ell - \hP^c_{\ell-1} = \pm 1$, resulting in $V_\ell = \bO(h_\ell)$.
To improve the variance to $\bO(h_\ell^{3/2-\delta})$ for all $\delta \!>\!0$
we follow the approach which was tested numerically in \cite{giles08b},
using the technique of conditional expectation (see section 7.2.3 in 
\cite{glasserman04}).

If $\hS^f_{N-1}$ denotes the value of the fine path approximation 
one timestep before maturity, then if we approximate the motion thereafter 
as a simple Brownian motion with constant drift 
$a^f_{N-1}\!\equiv\!a(\hS^f_{N-1},T\!-\!h_\ell)$ and volatility 
$b^f_{N-1}\!\equiv\!b(\hS^f_{N-1},T\!-\!h_\ell)$, the conditional 
expectation for the payoff is the probability that $\hS^f_N \!>\! K$ after 
one further timestep, which is
\begin{equation}
\hP_\ell^f = \Phi \left( \frac{\hS^f_{N-1} \!+\! a^f_{N-1} h_\ell - K}{|b^f_{N-1}|\, \sqrt{h_\ell}}\right),
\label{eq:digital1}
\end{equation}
where $\Phi$ is the cumulative Normal distribution.

For the coarse path, we note that given the Brownian increment $\Delta W_{N-2}$ for the 
first half of the last coarse timestep (which comes from the fine path simulation), 
the probability that $\hS^c_N \!>\! K$ is
\begin{equation}
\hP_{\ell-1}^c = \Phi \left( \frac{\hS^c_{N-2} \!+\! 2 a^c_{N-2} h_\ell \!+\! b^c_{N-2} \Delta W_{N-2} - K}
{|b^c_{N-2}|\, \sqrt{h_\ell}}\right).
\label{eq:digital2}
\end{equation}
The conditional expectation of (\ref{eq:digital2}) is equal to the conditional expectation of 
$\hP_{\ell-1}^f$ defined by (\ref{eq:digital1}) on level $\ell\!-\!1$, and so equality (\ref{eq:equality}) 
is satisfied.

A bound on the variance of the multilevel estimator is given by the following result:

\begin{theorem}
Provided $b(K,T) \neq 0$, and $S(t)$ has a bounded density in the neighbourhood of $K$, 
then the multilevel estimator for a digital option has variance 
$V_\ell = \bo(h_\ell^{3/2-\delta})$ for any $\delta\!>\!0$.
\end{theorem}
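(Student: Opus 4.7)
The plan is to adapt the extreme-path strategy used in the proof of Theorem~\ref{thm:barrier}, partitioning the sample space into three subsets based on how the fine path sits relative to the strike $K$. Fix a small $\gamma>0$ and write $Z^f=(\mu^f-K)/\sigma^f$ and $Z^c=(\mu^c-K)/\sigma^c$ for the arguments of $\Phi$ in (\ref{eq:digital1}) and (\ref{eq:digital2}), where $\mu^f\equiv\hS^f_{N-1}+a^f_{N-1}h_\ell$, $\mu^c\equiv\hS^c_{N-2}+2a^c_{N-2}h_\ell+b^c_{N-2}\Delta W_{N-2}$, $\sigma^f\equiv|b^f_{N-1}|\sqrt{h_\ell}$ and $\sigma^c\equiv|b^c_{N-2}|\sqrt{h_\ell}$. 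The three subsets are: (i) extreme paths in the sense of Lemma~\ref{thm:extreme_conditions}; (ii) non-extreme paths with $|\hS^f_{N-1}-K|<h_\ell^{1/2-2\gamma}$ (near the strike); and (iii) non-extreme paths with $|\hS^f_{N-1}-K|\geq h_\ell^{1/2-2\gamma}$ (far from the strike).

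For subset (i), since $\hP^f_\ell,\hP^c_{\ell-1}\in[0,1]$, every moment of their difference is bounded, so Lemma~\ref{thm:ep2} gives $\EE[(\hP^f_\ell-\hP^c_{\ell-1})^2\one_{(i)}]=\bo(h_\ell^p)$ for every $p>0$. For subset (iii), the key preparatory step is the estimate $|\mu^f-\mu^c|\prec h_\ell^{1-3\gamma}$, obtained by expanding $\hS^f_{N-1}-\hS^f_{N-2}$ with the Milstein recursion: the $b^f_{N-2}\Delta W_{N-2}$ contribution almost cancels the $b^c_{N-2}\Delta W_{N-2}$ in $\mu^c$, leaving only the Lipschitz-small difference $(b^f_{N-2}-b^c_{N-2})\Delta W_{N-2}$ plus $\bO(h_\ell)$ drift and Milstein-correction terms, each of which is $\prec h_\ell^{1-3\gamma}$ by (\ref{eq:ec4}) and the non-extreme bound on $|\Delta W_{N-2}|$. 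Combined with $|b^f_{N-1}|,|b^c_{N-2}|\prec h_\ell^{-\gamma}$ from (\ref{eq:ec3}), this yields $|Z^f-Z^c|\prec h_\ell^{1/2-3\gamma}$, which for small $h_\ell$ is much smaller than the lower bound $|Z^f|\geq h_\ell^{1/2-2\gamma}/(|b^f_{N-1}|\sqrt{h_\ell})\geq h_\ell^{-\gamma}$ available in (iii); hence $Z^c$ inherits the sign of $Z^f$ and has $|Z^c|$ also at least of order $h_\ell^{-\gamma}$, and the Gaussian tail bound $1-\Phi(x)\leq\exp(-x^2/2)$ makes $\Phi(Z^f)$ and $\Phi(Z^c)$ agree to within $\bo(h_\ell^p)$ for every $p>0$.

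The real work lies in subset (ii). The non-extreme bounds on $|\hS^f_{N-1}-\hS^f_N|$ and $|\hS^f_N-S(T)|$ give $|\hS^f_{N-1}-S(T)|\prec h_\ell^{1/2-2\gamma}$, so the event $|\hS^f_{N-1}-K|<h_\ell^{1/2-2\gamma}$ forces $|S(T)-K|\prec h_\ell^{1/2-2\gamma}$ and the density hypothesis on $S(T)$ yields $\PP((ii))\leq Ch_\ell^{1/2-2\gamma}$. Inside (ii), both $\hS^f_{N-1}$ and $\hS^c_{N-2}$ lie within $\bO(h_\ell^{1/2-2\gamma})$ of $K$, so the continuity of $b$ together with $b(K,T)\neq0$ forces $|b^f_{N-1}|,|b^c_{N-2}|\geq\fracs{1}{2}|b(K,T)|$ once $h_\ell$ is sufficiently small, and $\sigma^f,\sigma^c$ are bounded below by a positive multiple of $\sqrt{h_\ell}$. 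Rearranging
\[
Z^f-Z^c \,=\, \frac{\mu^f-\mu^c}{\sigma^f} \,+\, (\mu^c-K)\,\frac{\sigma^c-\sigma^f}{\sigma^f\sigma^c}
\]
and inserting $|\mu^f-\mu^c|\prec h_\ell^{1-3\gamma}$, $|\mu^c-K|\prec h_\ell^{1/2-2\gamma}$, and $|\sigma^c-\sigma^f|\prec h_\ell^{1-2\gamma}$ (the last from Assumption A1 applied to $|b^f_{N-1}-b^c_{N-2}|$) gives $|Z^f-Z^c|\prec h_\ell^{1/2-4\gamma}$. Since $\Phi$ has Lipschitz constant $1/\sqrt{2\pi}$, $(\hP^f_\ell-\hP^c_{\ell-1})^2\prec h_\ell^{1-8\gamma}$ throughout (ii), and the contribution from (ii) is $\bO(h_\ell^{3/2-10\gamma})$. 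Choosing $\gamma<\delta/10$ combines the three contributions into the required $V_\ell=\bo(h_\ell^{3/2-\delta})$.

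The main obstacle will be the clean bookkeeping behind the cancellation $|\mu^f-\mu^c|\prec h_\ell^{1-3\gamma}$: this single estimate is what makes both the far-from-strike subset contribute only an exponentially small amount (by forcing $Z^f$ and $Z^c$ into the same Gaussian tail) and the near-strike subset contribute $\bO(h_\ell^{3/2-10\gamma})$ rather than merely $\bO(h_\ell^{1/2})$. Once it, together with the fact that $b$ is bounded away from zero near the strike, is pinned down, the rest is a routine assembly of Lemmas~\ref{thm:ep1}, \ref{thm:ep2} and \ref{thm:extreme_conditions}.
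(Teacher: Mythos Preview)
Your approach is essentially the paper's: the same three-way partition into extreme paths, paths far from the strike, and paths near the strike, with Gaussian tail bounds handling the far case and the Lipschitz property of $\Phi$ together with the bounded-density hypothesis handling the near case. The differences are cosmetic: you key the near/far split on $\hS^f_{N-1}$ rather than on $S(T)$, and you work directly with $\mu^f-\mu^c$ via one Milstein step rather than reintroducing $\hS^f_N,\hS^c_N$ as the paper does; the exponent bookkeeping comes out as $3/2-10\gamma$ instead of the paper's $3/2-13\gamma$, which is immaterial.

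There is, however, one slip in your treatment of subset~(iii). You assert that the upper bounds $|b^f_{N-1}|,|b^c_{N-2}|\prec h_\ell^{-\gamma}$ together with $|\mu^f-\mu^c|\prec h_\ell^{1-3\gamma}$ give $|Z^f-Z^c|\prec h_\ell^{1/2-3\gamma}$. But bounding $|Z^f-Z^c|$ from above requires \emph{lower} bounds on $\sigma^f,\sigma^c$, and far from the strike you have no control on how small $|b^f_{N-1}|$ or $|b^c_{N-2}|$ may be. The repair is easy and is essentially what the paper does: since $\sigma^f,\sigma^c>0$, the signs of $Z^f$ and $Z^c$ coincide with those of $\mu^f-K$ and $\mu^c-K$; your estimate $|\mu^f-\mu^c|\prec h_\ell^{1-3\gamma}\ll h_\ell^{1/2-2\gamma}\leq|\mu^f-K|$ forces $\mu^c-K$ to have the same sign as $\mu^f-K$ and $|\mu^c-K|\geq\fracs{1}{2}h_\ell^{1/2-2\gamma}$, after which the \emph{upper} bound on $|b|$ gives $|Z^f|,|Z^c|\geq c\,h_\ell^{-\gamma}$ directly, with no need to estimate $|Z^f-Z^c|$ at all.
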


\begin{proof}
As in the proof of Theorem \ref{thm:barrier}, we split the paths into three subsets:\\[0.05in]
(i) extreme paths;\\[0.05in]
(ii) paths which are not extreme and for which
$
|S_N\!-\! K\, | > h_\ell^{1/2 - 4 \gamma};
$\\[0.05in]
(iii) the rest\\[0.05in]
and we analyse the contributions to $\EE[ (\hP_\ell^{f} \!-\! \hP_{\ell-1}^{c})^2]$ 
from all three subsets.

\vspace{0.1in}

(i) Paths are defined to be extreme if they satisfy any of the conditions of
Lemma \ref{thm:extreme_conditions} for $0<\gamma<\fracs{1}{4}$.
$\EE[ (\hP^f)^4]$ and $\EE[ (\hP^c)^4]$ are both finite, and hence the contribution of 
the extreme paths is $\bo(h_\ell^p)$, for all $p>0$.

\vspace{0.1in}

(ii)
If we define $\hS^f_N$ and $\hS^c_N$ to be the values which we would have obtained 
from the fine and coarse path simulations after the final timestep, then
\begin{eqnarray*}
\lefteqn{
\frac{\hS^f_{N-1} \!+\! a^f_{N-1} h_\ell - K}{|b^f_{N-1}|\, \sqrt{h_\ell}}
~~~=~~~ \frac{\hS^f_N - K}{|b^f_{N-1}|\, \sqrt{h_\ell}} 
}
\\&&
-\ \frac{b^f_{N-1}}{|b^f_{N-1}|\, \sqrt{h_\ell}} \left(
\Delta W_{N-1} + \fracs{1}{2} (b')^f_{N-1}
\left(\rule{0in}{0.14in} (\Delta W_{N-1})^2 \!-\! h_\ell \right)\right),
\end{eqnarray*}
and similarly
\begin{eqnarray*}
\lefteqn{
\frac{\hS^c_{N-2} \!+\! 2 a^c_{N-2} h_\ell \!+\!b^c_{N-2}\Delta W_{N-2} - K}{|b^c_{N-2}|\, \sqrt{h_\ell}}
~~~=~~~ \frac{\hS^c_N - K}{|b^c_{N-2}|\,\sqrt{h_\ell}} 
}
\\&&
-\ \frac{b^c_{N-2}}{|b^c_{N-2}|\, \sqrt{h_\ell}} \left(\rule{0in}{0.15in}
\Delta W_{N-1} + \fracs{1}{2}  (b')^c_{N-2}
\left(\rule{0in}{0.14in} (\Delta W_{N-2}\!+\!\Delta W_{N-1})^2 \!-\! 2 h_\ell\right) \right).
\end{eqnarray*}

Since the paths are not extreme, 
$|\Delta W_n| \leq h_\ell^{1/2-\gamma}$ and
$|S(T)\!-\!\hS^f_N| \leq h_\ell^{1-\gamma}$, 
and due to Lemma \ref{thm:extreme_conditions} 
$|b^f_{N-1}|\prec h_\ell^{-\gamma}$.
Consequently, if $S(T) > K + h_\ell^{1/2-3\gamma}$ then for sufficiently small $h_\ell$ 
it follows that
\[
\frac{\hS^f_{N-1} \!+\! a^f_{N-1} h_\ell - K}{|b^f_{N-1}| \, \sqrt{h_\ell}} > C\, h_\ell^{-2\gamma},
\]
for some suitably chosen constant $C$.
A similar result follows for the corresponding coarse path, and hence for these paths
$
\hP_\ell^{f} - \hP_{\ell-1}^{c} = \bo(h_\ell^p), 
$
for all $p\!>\!0$.  A similar argument applies to the other paths for which 
$S(T)<K-h_\ell^{1/2-3\gamma}$, and hence
$
\EE[ (\hP_\ell^{f} - \hP_{\ell-1}^{c})^2 {\bf 1}_{(ii)}]
$
is $\bo(h_\ell^p)$ for all $p\!>\!0$ and so this contribution is also negligible.

\vspace{0.1in}

(iii)
This subset consists of non-extreme paths for which $|S(T)-K| \leq h_\ell^{1/2-3\gamma}$.
Since
\[
b^f_{N-1} - b(K,T) = (b^f_{N-1}\!-\!b^f_N) + (b^f_N \!-\! b(K,T)),
\]
using Assumption A1 and (\ref{eq:ec2}) with $\gamma \!<\! \fracs{1}{4}$ 
we can conclude that for sufficiently small $h_\ell$,  
$
|b^f_{N-1} - b(K,T)| < \fracs{1}{2}\, |b(K,T)|
$
and in particular $b^f_{N-1}$ is non-zero and of the same sign as $b(K,T)$.
The same also applies to $b^c_{N-2}$ and hence, exploiting the Lipschitz property
$|\Phi(x_1)-\Phi(x_2)| \leq |x_1\!-\!x_2|$,
\begin{eqnarray*}
\lefteqn{ \left| \hP_\ell^{f} - \hP_{\ell-1}^{c} \right| }\\
&\leq& \left| \frac{\hS^f_{N-1} \!+\! a^f_{N-1} h_\ell - K}{|b^f_{N-1}|\, \sqrt{h_\ell}}
\ -\ \frac{\hS^c_{N-2} \!+\! 2 a^c_{N-2} h_\ell \!+\!b^c_{N-2}\Delta W_{N-2} - K}{|b^c_{N-2}|\, \sqrt{h_\ell}}
\right|
\\ &\leq&
\left| \frac{\hS^f_N - K}{|b^f_{N-1}|\, \sqrt{h_\ell}}  - \frac{\hS^c_N - K}{|b^c_{N-2}|\, \sqrt{h_\ell}}  \right|
\\ &&
+ \ \fracs{1}{2}  K_1 h_\ell^{-1/2} \left\{ (\Delta W_{N-1})^2
                                     + (\Delta W_{N-2}+\Delta W_{N-1})^2 + 3\, h_\ell \right\}.
\end{eqnarray*}

Using the identity (\ref{eq:diff}) we obtain
\begin{eqnarray*}
 \frac{\hS^f_N - K}{|b^f_{N-1}|\, \sqrt{h_\ell}}  - \frac{\hS^c_N - K}{|b^c_{N-2}|\, \sqrt{h_\ell}} 
&=& \ \ \frac{1}{2 \sqrt{h_\ell}}\, 
(\hS^f_N -\hS^c_N ) \left(\frac{1}{|b^f_{N-1}|} + \frac{1}{|b^c_{N-2}|}\right)
\\&+& \frac{1}{2 \sqrt{h_\ell}}\, 
(\hS^f_N \!+\! \hS^c_N \!-\! 2K) \left(\frac{|b^c_{N-2}| \!-\! |b^f_{N-1}|}{|b^f_{N-1}| \ |b^c_{N-2}|}\right).
\end{eqnarray*}
Using the bounds provided by Lemma \ref{thm:extreme_conditions}, it follows that
\[
\frac{\hS^f_N - K}{|b^f_{N-1}|\, \sqrt{h_\ell}}  - \frac{\hS^c_N - K}{|b^c_{N-2}|\, \sqrt{h_\ell}} 
= \bO(h_\ell^{1/2 - 5 \gamma}),
\]
and hence
$
\hP^f_\ell - \hP^c_{\ell-1} = \bO(h_\ell^{1/2 - 5 \gamma}).
$
Since $\EE[{\bf 1}_{(iii)}] = \bO(h_\ell^{1/2 - 3\gamma})$ due to the bounded
probability density for $S(T)$, it follows that
$
\EE[(\hP^f_\ell \!-\! \hP^c_{\ell-1})^2{\bf 1}_{(iii)}] = \bO(h_\ell^{3/2 - 13 \gamma}).
$
Choosing $\gamma < \min(\fracs{1}{4},\delta/13)$ completes the proof.

\end{proof}

\section{Conclusions and future work}

In this paper we have proved that when using the Milstein discretisation 
for a scalar SDE the variance of the multilevel estimator is $\bO(h_\ell^2)$
for Lipschitz and Asian options, $\bO(h_\ell^2 (\log h_\ell)^2)$ for lookback options,
and  $\bo(h_\ell^{3/2-\delta})$ for barrier and digital options, for any $\delta>0$.

The MLMC theorem \cite{giles08} also requires knowledge of the order of weak 
convergence.  Theorems \ref{thm:KP} and \ref{thm:KPBB} together give $\bO(h)$ 
weak convergence for the Lipschitz and Asian options, and $\bO(h \log h)$ 
convergence for the lookback option.  For the digital and barrier options, 
the analysis of the multilevel convergence can be modified to instead consider 
$\EE[ \hP_\ell \!-\! P ]$, and hence it can be proved that the weak order of 
convergence is $\bo(h^{1-\delta})$ for any $\delta>0$.  From this, in all cases
considered in this paper it can be concluded that the computational cost to 
achieve a RMS accuracy of $\eps$ is $\bO(\eps^{-2})$.

The challenge for the future is to construct and analyse effective MLMC 
estimators for multi-dimensional SDEs which do not satisfy the commutativity
condition, and therefore would require the simulation of L{\'e}vy areas to
achieve first order strong convergence.  Giles and Szpruch \cite{gs14} have 
proved that if one ignores the L{\'e}vy area terms, it is nevertheless
possible to construct an efficient antithetic MLMC estimator, despite the 
strong convergence being of the same order as the Euler-Maruyama discretisation.
They prove $V_\ell = O(h_\ell^2)$ for a European payoff which is twice 
differentiable, and $V_\ell = O(h_\ell^{3/2})$ for a payoff such as a put or 
call function which is continuous but not everywhere twice-differentiable.
However, it is still an open problem to construct and analyse good estimators 
for lookback, barrier and digital options for general systems of SDEs.

\bibliographystyle{plain}
\bibliography{../../../bib/mlmc,../../../bib/mc}

\end{document}